\newcommand{\qq}{\mbox{\boldmath $q$}}
\newcommand{\rr}{\mbox{\boldmath $r$}}
\newcommand{\xx}{\mbox{\boldmath $x$}}
\newcommand{\zz}{\mbox{\boldmath $z$}}
\newcommand{\pp}{\mbox{\boldmath $p$}}
\newcommand{\ww}{\mbox{\boldmath $w$}}
\newcommand{\yy}{\mbox{\boldmath $y$}}
\newcommand{\ga}{\mbox{$\gamma$}}
\newcommand{\al}{\mbox{$\alpha$}}
\newcommand{\argmax}{\mbox{\rm argmax}}
\newcommand{\ra}{\rightarrow}
\newcommand{\CM}{\mbox{${\cal M}$}}
\newcommand{\R}{\mbox{\rm\bf R}}
\newcommand{\Rplus}{\R_+}
\def\fnum@figure{{\bf Figure \thefigure}}
\def\fnum@table{{\bf Table \thetable}}
\long\def\@mycaption#1[#2]#3{\addcontentsline{\csname
 ext@#1\endcsname}{#1}{\protect\numberline{\csname
  the#1\endcsname}{\ignorespaces #2}}\par
     \begingroup
       \@parboxrestore
          \small
       \@makecaption{\csname fnum@#1\endcsname}{\ignorespaces
#3\endgroup}
      }
\begin{document}

\title{Equilibrium Pricing of Semantically Substitutable Digital Goods}

\author{
{\Large\em Kamal Jain}\thanks{Ebay Research, 2065 Hamilton Ave, San Jose,
CA 95125. Email: kamaljain@gmail.com.} 
\and
{\Large\em Vijay V. Vazirani}\thanks{College of Computing,
Georgia Institute of Technology, Atlanta, GA 30332--0280.
Supported by NSF Grants CCF-0728640 and CCF-0914732, ONR Grant N000140910755, and a Google Research Grant.
Email: {\sf vazirani@cc.gatech.edu}}
}

\setcounter{page}{0}

\maketitle

\begin{abstract}
The problem of arriving at a principled method of pricing goods and services was very satisfactorily solved for conventional
goods; however, this solution is not applicable to digital goods. This paper studies pricing of
a special class of digital goods, which we call {\em semantically substitutable digital goods}. 
After taking into consideration idiosyncrasies of goods in this class, we define a market model for it, together with
a notion of equilibrium. We prove existence of equilibrium prices for our market model using Kakutani's fixed point theorem.

The far reaching significance of a competitive equilibrium is made explicit in the Fundamental Theorems of Welfare Economics.
There are basic reasons due to which these theorems are not applicable to digital goods. This naturally leads to the question of 
whether the allocations of conventional goods are rendered inefficient or ``socially unfair'' in the mixed economy we have proposed. 
We prove that that is not the case and that in this sense, the intended goal of Welfare Economics
is still achieved in the mixed economy.
\end{abstract}

\newpage

\section{Introduction}

The central problem of mathematical economics -- arriving at a principled method of pricing goods and services which
leads to an efficient allocation of scarce resources among alternative uses -- was very satisfactorily solved for conventional
goods. The solution was based on Adam Smith's principle of maintaining parity between supply and demand \cite{Adam}, Walras' notion of
equilibrium \cite{walras}, and the Arrow-Debreu Theorem, which proved the existence of equilibrium in a very general model of the economy \cite{AD}.

By a {\em digital good} we mean a good that can be stored on a digital computer.
For such goods, which already occupy a large fraction of the economy and are expected to only grow in the future,
this solution is not applicable -- once produced, a digital good can be reproduced at (essentially) zero cost, thus making its
supply infinite. Several economists have addressed this dilemma and have proposed ways of pricing digital goods e.g., see 
\cite{digital, Bakos, Quah, varian.price}.
However, to our knowledge, none of them has given a reasonable market model that leads to pricing such goods 
from a notion of equilibrium that balances supply and demand, i.e., as was achieved for conventional goods. 

In this paper, we propose such a market model for the special class of digital goods defined in the next section. 
We start by observing that traditional market models are not applicable for pricing digital goods. The reason is that an allocation made
at a competitive equilibrium in such a model satisfies the first welfare theorem and is therefore Pareto optimal. On the other hand, since 
the supply of digital goods is unbounded, any Pareto optimal allocation must give one copy of each digital good to each agent. 
Clearly, such an allocation is not particularly meaningful for our purposes.

Throughout this paper, we will assume that copyright laws are strictly enforced. Hence, only the agent who owns/produces a certain
digital good is allowed to sell copies of it; an agent who buys this good is not allowed to sell copies of it, i.e., the owner of this good 
has a monopoly on its sale. At first sight it would appear that pricing of digital goods should follow from monopolistic considerations.
Yet, we identify a class of digital goods below for which monopolistic pricing should not apply. We then give a method of expressing
preferences that is more appropriate than the traditional method for this class of goods -- it involves expressing preferences by combining
a cardinal component with an ordinal component. This and other considerations lead to the natural assumption of uniform pricing, which in
turn yields a simple market model for which we can define notions of ``supply'' and ``demand,'' and a notion of equilibrium prices.

Perhaps the most intriguing aspect of this work is that the property of unlimited supply turns out to be an advantage rather than an 
impediment to finding equilibrium pricing of digital goods -- the pricing structure we obtain is in fact simpler than the traditional one. 
However, it is not applicable to conventional goods,
except in some very special circumstances as mentioned below. In retrospect, the two main ideas that led
to this model are our method of expressing preferences and the assumption of uniform pricing -- one can see that these two ideas
fit with each other very well.

The generality of the Arrow-Debreu theorem is quite remarkable:  for any market model for which one can define a preference
order over all bundles of goods, one can establish existence of equilibrium as a corollary of this theorem. In Section \ref{sec.model} we 
show that such a preference order cannot be defined for our market model and hence we need to prove the existence of equilibrium 
for it from first principles.

A key question remaining is whether the Fundamental Theorems of Welfare Economics \cite{Arrow, Debreu} are satisfied by this mixed economy, 
consisting of conventional 
and digital goods. Traditional market models satisfy the First Welfare Theorem, which states that allocations made via a competitive equilibrium are 
efficient.  As stated above, the availability of infinite number of copies of each digital good renders the usual notion of efficiency inapplicable to our model. 
Inefficiency would be an even more acute issue if the allocation of conventional goods, which are presumably more important for survival,
is also rendered inefficient by the mixed economy.  Hence, 
one may ask if a competitive equilibrium in this mixed economy ensures efficiency as far as conventional goods are concerned.

Our answer to this question is ``Yes''. In Section \ref{sec.welfare}, we give the notion of {\em partial Pareto optimality}, which is essentially Pareto optimality 
assuming that the digital part of the economy, i.e., production and allocation of digital goods, is fixed. As our First Welfare Theorem,
we prove that equilibrium allocations in our mixed economy are partially Pareto optimal.

The traditional Second Welfare Theorem has profound implications. It says that the issues of economic efficiency and equity 
can be effectively separated out; the distributive objectives of the state can be met via initial lump sum taxes and incentives followed by
letting the market attain its own equilibrium. Once again, in our mixed economy, this theorem is not applicable as such.
Fortunately, such a theorem is not essential for ensuring social equity -- after all, our digital economy consists of goods primarily
from the entertainment industry which are (presumably) not as essential for people's survival. 

In our Second Welfare Theorem, the target distributive objectives are provided by a restriction of the mixed economy to only conventional 
goods. We show how any target set in this manner leads to a prescription of lump sum transfers. Once they are made and the mixed economy is
allowed to attain its equilibrium, we guarantee that each agent will be at least as
happy with their allocation as was desired; of course, optimal bundles now consist of conventional and digital goods. 
The choice of `what is optimal' is left to individual agents -- if someone requires lots of inspiration and only a little bit of food 
for survival then that is what they will get. We note that whereas the Second Welfare Theorem is usually proved using the hyperplane separation
theorem, we seem to require the full power of Kakutani's fixed point thoerem.

\section{Semantically Substitutable Digital Goods}
\label{sec.semantic}

In this paper, we will study pricing of digital goods that can be partitioned into a few categories through semantic considerations such that
there is a high degree of substitutability of goods within a category. However, if all the goods in the category were equally accessible, agents 
would want to optimize all the way and get their most favorite goods. The entertainment industry is replete with such digital goods, e.g.,
music, movies and video games (barring exceptional cases).
Examples of semantic categories for music may be contemporary classical music, premier jazz, second tier jazz, 
children's songs, etc. We will call this class of goods {\em semantically substitutable digital goods}. 
Henceforth, a unit of good in a digital category will be generically called a {\em song}. The assumption of high degree of substitutability 
ensures that owners of songs cannot exercise monopolistic pricing for the sale of their songs.

Examples of digital goods that do not fall into this class include software, online news and news papers,
financial advice and audio books. Finally observe that even ideas can be viewed as digital goods, since they can be stored on a digital
computer and sold to an unbounded number of times; however, they do not fall in our class.

A common feature of semantically substitutable digital goods -- and in fact many other types of digital goods --
which sets them apart from conventional goods, is the staggering number of different goods, belonging to 
the same genre, that are available with equal ease -- and typically great ease, i.e., simply downloading from the Internet. For example,
iTunes currently has 11 million songs, the Netflix on-demand Internet streaming video store has over 100,000 movies and 
the iPhone apps store has 350,000 apps for sale. Henceforth we will assume each category in our class contains a large number of songs.

Because of this fact, viewing all the songs as individual goods and providing a traditional utility may not be feasible for the agents. Therefore, 
we first give a method of expressing preferences that is more appropriate for this class of goods. It has two components, a cardinal 
component that helps determine how many songs the agent should get from each category and an ordinal component which helps
determine which particular songs she should get from each category. For the cardinal component each agent picks a concave function,
and hence this part satisfies decreasing marginal utility, as usual. For the ordinal component, each agent provides, for each digital category,
a total order on all songs in it. This helps avoid the cumbersome (and perhaps even meaningless) task of making fine quantitative distinctions 
between the large number of
songs in each category. In the past these two components have been used separately, the best examples being traditional utility functions and 
the stable marriage problem, respectively. However, to our knowledge they have not been used together.

We now give an illustrative example. Suppose an agent needs some songs every month from each of the following categories:  jazz, pop music, 
children's music, and recently released movies. Each one has a different purpose: providing relaxation, listening while driving from home to 
work and back, liked by the children, and  entertainment during weekends. Clearly, she can make quantitative statements about the relative 
importance of each category. However, within a category, she obviously prefers to get her most favorite songs. Fortunately, unlimited supply of 
digital goods makes the latter possible.

This method of expressing preferences leads naturally to the assumption given below on pricing goods in a category. First, we note that
pricing all goods in a category individually would be an enormous burden on the seller. More importantly, it would be an even bigger burden on the 
buyer (imagine deciding between song A for \$1.23, song B for \$1.42, ...), and would discourage people from participating in such a market\footnote{Indeed, 
this is the quintessential scenario where the usual game-theoretic assumptions of individual rationality and infinite computing power 
(together with infinite patience) fail completely.}.

{\bf  Assumption of uniform pricing:} For the class of semantically substitutable digital goods, in each category, all songs will have the same price.

A few categories with uniform pricing within each category has already emerged as the norm over the last few decades in several multi-billion 
dollar marketplaces associated with digital goods: First for goods that were the forerunners of semantically substitutable digital goods 
digital, e.g., CDs for music and video cassettes and DVDs for movies. Uniformity became even more strict when these goods started being sold 
as bona fide digital goods, e.g., iTunes started off by uniformly pricing all songs, but recently partitioned their 11 million songs into 
3 price categories. For renting movies, iTunes has only 2 different price categories. Similar practices apply to video games, and in fact to
other classes of digital goods as well, e.g., iPhone/Android apps.
In contrast, uniform pricing within a few categories has not emerged as the norm for the pricing of audio books, which are digital goods. Interestingly, this
was the case for the forerunners of these goods as well, i.e. for normal books.

We note that the simplicity of pricing made possible by the assumption of uniform pricing has led to their use in some special non-digital markets
as well, e.g., for a large number of drugs, Walmart charges \$4 for a 30-day dose and \$10 for a 90-day dose. A second category, containing much fewer
drugs go for \$9 for a 30-day dose and \$24 for a 90-day dose \cite{mart}.
As another example, consider the pricing of Chinese dim sum -- the amount charged
from a table depends on the number of small plates, large plates and baskets of food consumed. These 3 categories contain different sets of food items.

In order to achieve a simplified pricing structure, can the ideas stated above be used for conventional goods after partitioning them  
into a few categories, e.g., vegetables, fruits, breads and beverages?  Can one enforce uniform prices in the categories?  Since these questions will
reveal important aspects of our market model, let us explore them in some depth. Let us assume that the agents provide cardinal utilities across
categories and ordinal utilities within categories. For the latter, for each category, say fruits, the agent provides an infinite sequence of fruits,
with repetitions. If at given prices, the agent is to receive $k$ fruits, then she should be allocated the first $k$ fruits in her sequence. Clearly,
such a market is bound for failure, since the demand for individual fruits will typically not be compatible with supply -- certain fruits will be
deficient while others will be in excess. In the absence of the capability of lowering the price of the latter, they will simply rot.

On the other hand digital goods will never be deficient because they can be replicated indefinitely. Moreover, digital goods are not perishable 
and their storage does not require huge warehouses. Thus, in a sense the digital economy appears to be tailor made
for the simplified way of expressing preferences and the simplified pricing structure presented above.

\subsection{Some idiosyncrasies of ``songs''}
\label{sec.idio}

Having settled on uniform pricing of all songs within a category, we are still left with the questions of defining the notion of an 
optimal bundle of goods for an agent and of pricing individual categories from
first principles, i.e., from a notion of equilibrium that is appropriate for our setting.
In this section, we point out some idiosyncrasies of the digital class we have identified which
will lead naturally to the formal model defined in the next section and a notion of equilibrium for it.

First, as already pointed out above, once produced, the supply of a song is unbounded.
Second, an agent desiring 2 songs wants 2 different songs; 2 copies of the same song are no better than 1 copy. Third, although all 
songs in a category are uniformly priced, they are not equally desired by an agent. In fact, the typical case is that there is a wide variation
in how an agent rates the different songs of a category. Furthermore, the ratings of different agents for songs within the same category
also vary widely. To capture the last aspect, we need to make a radical change from traditional market models as detailed in the next section.

\section{The Market Model}
\label{sec.model}

The digital economy we describe can be added to, and will harmoniously co-exist with, a conventional Arrow-Debreu economy. 
In interest of highlighting the main new ideas, we have kept the latter economy simple by assuming that it has only one good which we 
will call {\em bread}. Another simplification made is that the initial endowment of an agent will only consist of songs. 

Our model has a set $A$ of $n$ agents numbered 1 to $n$ and some genres of semantically substitutable digital goods, e.g. music 
and movies. We will assume that this set of digital goods have been partitioned into $g$ semantic categories, numbered 1 to $g$. 
We will assume that bread is numbered as good 0. Let $G = \{0, 1, 2, \ldots, g \}$ represent the set of all digital categories and bread.
We will assume that all the goods, conventional and digital, are divisible\footnote{Without the assumption of
divisibility, equilibrium will simply not exist, e.g., see \cite{gean} which assumes, in the context of collateral equilibrium,
that houses are divisible. Typical semantically substitutable digital goods are bought in large numbers, thereby making 
rounding errors insignificant and justifying the assumption of divisibility in our model.}.
We will index agents by $i$ and elements of $G$ by $j$.

Assume that the price per unit of bread is denoted by $p_0$ and the price per song in category $j$ is denoted by $p_j$. Let 
$\pp = (p_0, p_1, \ldots, p_g)$ denote the entire price vector.
We will assume that each digital category $j$ 
starts off with a non-empty set $S_j$, which in a sense ``defines'' this category. Each song in $S_j$ is owned by a unique agent; 
let $S_{ij}$ denote the songs of this category owned by $i$. Observe that this also defines the initial endowments of agents since
they consist only of songs. The earnings from the sales of a song in $S_j$ go to the specified owner.  

For each agent, the set of feasible production points is a compact, convex set in $\Rplus^{g+1}$. 
We will assume that the rate of production of bread by an agent is never zero.
Moreover, we will assume that the only factor limiting its production is labor, i.e., there is an indefinite supply of raw materials. 
We will denote the entire production by $\yy$. Set $y_{ij}$ will denote the songs in category $j$ produced by agent $i$; $y_{ij}$ may contain 
a fractional song. The entire production of $i$ will be denoted by $\yy_i$. 
The earnings of $i$ will be $\sum_{j=0}^k {p_j \cdot z_{ij}}$, where $z_{i0} = y_{i0}$ is the number of units of bread produced by $i$ and
for $1 \leq j \leq g$,  $z_{ij}$ is the number of copies of songs in $S_{ij} \cup y_{ij}$ that were sold (taking fractions into consideration).
Observe that each agent will have positive earnings since she has the fall-back option of producing bread.

At prices $\pp$, agent $i$'s optimal bundle of goods is obtained via a 2 step process. Let us introduce some definitions before giving 
this process. Function $f_i: \Rplus^{g+1} \ra \Rplus$ denotes the {\em coarse utility function} of agent $i$;
$f_i$ is assumed to be quasi-concave, continuous and satisfying non-satiation. A {\em coarse allocation} for agent $i$ is a vector
$\zz_i =(z_{i0},\ldots, z_{ig})$ specifying the number of units of bread and the number of songs from each category that get allocated to 
agent $i$. Her utility from this allocation is $f_i(\zz_i)$. 
For each digital category $j$, each agent $i$ has a total order over the set $(S_j \cup A)$; we will denoted this total order by $T_{ij}$. 
Informally, this total order specifies the rating of $i$, over all agents in $A$ as producers of the songs in category $j$, as well as the songs 
already present in $S_j$.

At prices $\pp$, agent $i$'s optimal bundle of goods is determined as follows. First, let $\zz_i =(z_{i0},\ldots, z_{ig})$ be
the coarse allocation that maximizes $f_i(\zz_i)$ w.r.t. prices $\pp$, taking into consideration the amount of money agent $i$ has available. 
Next, $i$ is allocated $z_{i0}$ units of bread and 
for each digital category $j$, $i$ will be allocated $z_{ij}$ songs from this category in the order specified by $T_{ij}$, after taking into 
consideration the songs produced by all agents in this category. Let us give a small example: Suppose agent $T_{ij}$ has song $s$
and agent $i'$ as its top 2 elements, $z_{ij} = 3.5$ and agent $i'$ has produced 4 songs in category $j$. Then, $i$ will be allocated
a copy of $s$ and copies of 2.5 of the 4 songs produced by $i'$. The eventual set of goods allocated to an agent will be called 
her {\em detailed allocation}. We will denote $i$'s detailed allocation by $\xx_i =(x_{i0},\ldots, x_{ig})$, where $x_{i0} = z_{i0}$ and
for $1 \leq j \leq g$, $x_{ij}$ is the set of songs allocated to $i$, taking fractions into consideration. Observe that an agent may buy a song 
she has produced or owns. Since the earnings from this sale go back to her, effectively she is getting such a song for free, as is reasonable.

We now give an example to show that not every bundle of songs is valid in our model and hence 
one cannot define a preference order over all bundles in our model.
Suppose $S_j = \{ s_1, s_2, s_3 \}$ and the total order of agent $i$ for this category has $s_3, s_2, s_1$ as its top 3 entries, in that
order. Then, $\{ s_1, s_2 \}$ is not a valid bundle for $i$ whereas $\{ s_2, s_3 \}$ is.

\section{The Notion of Equilibrium}
\label{sec.notion}

First, consider the following viewpoint for the case of conventional goods. The usual assumption of divisibility of conventional 
goods is not true at some minute enough level -- let us call it a morsel. Clearly, each morsel can be consumed by at most one agent. Now
equilibrium demands that all morsels that are available be consumed and no agent has unfulfilled demand for more at given prices.

Analogously, for our realm, equilibrium should satisfy the following.
Each song can be sold at most once to each agent. Furthermore, each song that is available must
be sold to at least one agent and no agent should have unfulfilled demand in any category.

We now formally define the notion of equilibrium for our model. We will make the usual assumption that each agent works in a manner 
that optimizes  her earnings and she spends all her earnings to buy an optimal bundle of goods. 
We will say that prices $\pp$, detailed allocation $\xx$ and production $\yy$ constitute an equilibrium if
they satisfy the following {\bf conditions}:
\begin{enumerate}

\item
For each agent $i \in A$, $\yy_i$ optimizes $i$'s total earnings, i.e., from sales of the songs she owns and produces and 
from bread she produces, w.r.t. prices $\pp$, the allocations of digital goods to all the agents, $\xx_{(-0)}$, and 
the production of digital goods of all the rest of the agents, $\yy_{(-i, -0)}$. 
\item
For each agent $i \in A$, $\xx_i$ is an optimal detailed allocation at prices $\pp$ for the amount of money she has earned.

\item
The market clears. For bread this means that the amount of bread demanded equals the amount of bread produced. For each digital 
category $j$, if $p_j >0$, then at least one full 
copy of each song in $S_j$ and each song produced in this category is sold. By ``full copy'' we mean that some agent buys the entire song; 
2 agents buying half of this song is not sufficient. In particular, if an agent produces a fraction $f$ of a song, then some agent must buy 
this entire fraction. Furthermore, for market clearing we want that for each agent $i$, $x_{ij}$ must not be greater than the total number of 
songs available in category $j$. 
\end{enumerate}

Observe that for the case of produced songs, optimality of agents' production also implies that one full copy of the song be 
sold -- because otherwise, the agent is better off producing bread instead of the unsold part of the song. However, this condition 
is also a market clearing condition, hence we have included it there as well.

For digital categories, our equilibrium notion has a Cournot-type aspect to it.
Unlike bread, for which the earnings depend only on the amount produced (assuming prices are fixed), the earnings from a  
song depend on the number of copies sold which in turn depends on the production of the rest of the agents.

\section{Existence of Equilibrium}
\label{sec.proof}

Let $(\pp, \xx, \yy)$ be prices, detailed allocations and production of goods for the instance $\CM$ of our market.
We first give some definitions w.r.t. digital category $j$.
With a slight abuse of notation, for an agent $i$ and category $j$, we will denote the number of songs produced by $i$ in this category by $y_{ij}$ and
for $k \in S_j$, we will define $y_{kj} = 1$, i.e., $S_j$ contains complete songs. 
Let $Y_j$ denote the total number of songs available in this category, i.e., either initially or produced. Clearly, $Y_j = \sum_{k \in (A \cup S_j)} {y_{kj}}$.
Again, with a slight abuse of notation, we will denote the number of songs obtained by $i$ from category $j$ by $x_{ij}$.
$x_{ij}$ is further partitioned by adding a third coordinate $k$. For $k \in S_j$, $x_{ijk}$ will denote how much of this song is bought by $i$. Also, for each agent
$k \in A$, $x_{ijk}$ will denote how many songs produced by agent $k$ are bought by $i$. Observe that the two sets of $k$'s are disjoint.
Furthermore, all $x_{ijk}$s, for $k \in (S_j \cup A)$, specify $x_{ij}$. Hence, we will view $\xx$ as having 3 indices, $i, j$ and $k$.

Define the {\em excess demand} of agent $i$ for category $j$,
\[  d_{ij} = \max \{0, x_{ij} - Y_j \};\]
observe that this is the number of songs that cannot be allocated to $i$ under the given allocation and production, $\xx$ and $\yy$.
Clearly, $d_{ij} \geq 0$.

For song $k \in (A \cup S_j)$, define the {\em excess supply} of this song to be
\[ l_{kj} =  y_{kj}  -  \max_{i \in A} \{x_{ijk} \} .\]
Because of the manner in which a detailed allocation is obtained, $l_{kj} \geq 0$.

We next introduce the notion of a {\em market maker}. We will need this notion only for the purpose of proving existence of equilibrium; 
this fictitious entity is not needed in our model or our notion of equilibrium. 
If for some $i$, $d_{ij} > 0$, then the market maker will produce $\max_{i \in A} \{d_{ij} \}$
extra songs in category $j$ and will sell them appropriately to ensure that each agent $i$ gets $x_{ij}$ songs. 
Furthermore, if $l_{kj} >0$, then the market maker will buy $l_{kj}$ amount of the song produced by agent $k$ in category $j$. 

Now, let us define the total number of songs bought by all agents in category $j$,
\[ b(j) =  \sum_{i \in A} {x_{ij}} \ = \sum_{i \in A}  { \left(d_{ij} + \sum_{k \in (A \cup S_j)} {x_{ijk}} \right) }   .\]
Finally, let us define the total number of songs sold from category $j$ to be
\[ s(j)  =  \sum_{k \in (A \cup S_j)}  { \left( l_{kj}  +  \sum_{i \in A}  {x_{ijk}} \right) }  .\]

For the case of bread, denote the number of units bought by
\[ b(0)  =  \sum_{i \in A}  {x_{i0}}  \]
and the number of units produced by
\[  s(0)  =  \sum_{i \in A}  {y_{i0}}  .\]

\begin{lemma}
\label{lem.equal}
For a digital category $j$, if $b(j) = s(j)$ then there is no excess demand or excess supply w.r.t. this category, i.e.,
$\forall i \in A, \ d_{ij} = 0$ and \ $\forall k \in (A \cup S_j),  \ l_{kj} = 0$.
\end{lemma}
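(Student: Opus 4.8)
The plan is to turn the hypothesis $b(j) = s(j)$ into a single scalar identity between the aggregate excess demand $\sum_{i \in A} d_{ij}$ and the aggregate excess supply $\sum_{k \in (A \cup S_j)} l_{kj}$, and then to show that these two quantities cannot both be positive, so the identity forces both of them to vanish. Expanding the definitions,
\[ b(j) = \sum_{i \in A} d_{ij} + \sum_{i \in A} \sum_{k \in (A \cup S_j)} x_{ijk}, \qquad s(j) = \sum_{k \in (A \cup S_j)} l_{kj} + \sum_{k \in (A \cup S_j)} \sum_{i \in A} x_{ijk}, \]
and the two double sums of the $x_{ijk}$'s coincide, being the same finite sum written in two orders. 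Hence $b(j) - s(j) = \sum_{i \in A} d_{ij} - \sum_{k \in (A \cup S_j)} l_{kj}$, so $b(j) = s(j)$ is equivalent to $\sum_{i \in A} d_{ij} = \sum_{k \in (A \cup S_j)} l_{kj}$. Since each $d_{ij} \geq 0$ and each $l_{kj} \geq 0$, it now suffices to prove that at least one side of this equation is $0$.

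The key step is the observation that excess demand in a category precludes excess supply of any song in it: if $d_{ij} > 0$ for some agent $i$, then $l_{kj} = 0$ for every $k \in (A \cup S_j)$. Indeed, $d_{ij} > 0$ means $x_{ij} > Y_j$, i.e., the number of songs of category $j$ that agent $i$ is to receive exceeds the total number $Y_j$ actually available. Unwinding the way a detailed allocation is produced from the coarse allocation via the total order $T_{ij}$ --- agent $i$ is handed songs of category $j$ in the order $T_{ij}$ until her quota is met, with the fictitious market maker supplying only the residual $d_{ij}$ after everything genuinely available has been handed to her --- we conclude that $i$ receives a full copy of every available song, i.e., $x_{ijk} = y_{kj}$ for all $k \in (A \cup S_j)$. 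Therefore $\max_{i' \in A}\{x_{i'jk}\} \geq x_{ijk} = y_{kj}$; combined with $\max_{i' \in A}\{x_{i'jk}\} \leq y_{kj}$ (the inequality that makes $l_{kj} \geq 0$ in the first place), this gives $l_{kj} = 0$.

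It then remains only to assemble the pieces. If some $d_{ij}$ were positive, the previous step would give $\sum_{k \in (A \cup S_j)} l_{kj} = 0$, hence by the scalar identity $\sum_{i \in A} d_{ij} = 0$, contradicting $d_{ij} > 0$ and nonnegativity of the $d$'s. Thus $d_{ij} = 0$ for every $i \in A$; feeding this back into the identity yields $\sum_{k \in (A \cup S_j)} l_{kj} = 0$, and nonnegativity of each $l_{kj}$ forces $l_{kj} = 0$ for every $k \in (A \cup S_j)$. I expect the only delicate point to be the combinatorial claim in the middle paragraph --- that $x_{ij} > Y_j$ forces $x_{ijk} = y_{kj}$ for each $k$ --- which is a matter of carefully reading off the detailed-allocation procedure and the role of the market maker; the algebra around it is just bookkeeping.
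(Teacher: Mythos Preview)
Your proof is correct and follows essentially the same route as the paper: both derive the identity $\sum_{i} d_{ij} = \sum_{k} l_{kj}$ by cancelling the common double sum of $x_{ijk}$'s in $b(j)$ and $s(j)$, both invoke the key observation that $d_{ij} > 0$ for some $i$ forces $l_{kj} = 0$ for all $k$, and both finish via nonnegativity. Your treatment of the ``delicate point'' is in fact more explicit than the paper's, which simply asserts that a buyer with positive excess demand must buy all $Y_j$ songs.
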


\begin{proof}
From the definitions of excess demand and excess supply, we have
\[  \exists i \in A \ s.t. \ \ d_{ij} > 0  \ \ \ \ \Rightarrow \ \ \ \  \forall k \in (A \cup S_j): \  l_{kj} = 0 ,\]
since buyer $i$ with $d_{ij} > 0$ must buy all $Y_j$ sings. Furthermore, 
\[ \exists k \in (A \cup S_j): \  l_{kj} > 0 \ \ \ \ \Rightarrow \ \ \ \  \forall i \in A: \ d_{ij} = 0 ,\]
since not all songs in category $j$ are fully sold. Thus, w.r.t. category $j$,
we cannot have an agent having positive excess demand and a song having positive excess supply simultaneously.

Next, equating the expressions for $b(j)$ and $s(j)$ given above and canceling common terms we get
\[ \sum_{i \in A}  {d_{ij}}  \  = \    \sum_{k \in (A \cup S_j)}  {l_{kj}}  . \]
By the previous fact and the fact that both sides of this equation are non-negative, they must both be zero. The lemma
follows from the non-negativity of $d_{ij}$'s and $l_{kj}$'s.
\end{proof}

\subsection{The correspondence and its properties}
\label{sec.corres}

In this section we will define the correspondence, $F$, on which we will apply Kakutani's fixed-point theorem to show 
existence of equilibrium. We first specify the domain over which $F$ is defined.
Let $M$ denote an upper bound on the number of units of bread or the number of songs that would be available if all agents simply
produced only one good; the number of initial songs in the $S_j$'s is also taken into consideration in obtaining $M$. Next, we will modify the 
assumption of non-satiation as follows: assume that each agent gets fully satiated for good $j$ after obtaining $1.1 M$ units of it, i.e., 
her utility function ``flattens out'' in dimension $j$ after $1.1 M$ units. This does not change the equilibrium; however, it ensures the
useful property that optimal bundles are bounded, thereby enabling us to define $F$ over a convex, compact set. Let $D$ be the
interval $[0, 1.1M]$ and $\Delta_g$ be the unit $g$-dimensional simplex. Let $s = max_{j = 1}^g \{|S_j|\}$.

Then, the domain of $F$ is $\Delta_g \times D^{n(g+1)(n+s)} \times D^{n(g+1)}$.
We will represent an element of this domain by $(\pp, \xx, \yy)$, where $\pp \in \Delta_g$, $\xx \in D^{n(g+1)(n+s)}$ and $\yy \in D^{n(g+1)}$.

Finally,  for an element $(\pp, \xx, \yy)$ in this domain, $(\pp', \xx', \yy') \in F(\pp, \xx, \yy)$ iff $(\pp', \xx', \yy')$ lies in this domain and

\begin{itemize}
\item
$\xx'$ is an optimal detailed allocation w.r.t. prices $\pp$ and production $\yy$.
\item
$\yy'$ is a production of all agents satisfying: $\forall i \in A$, $\yy_i '$ is an optimal production of agent $i$ w.r.t. prices $\pp$,
allocation $\xx$ and the productions of the rest of the agents according to $\yy_{-i}$.
\item
\[ \pp'  =  \argmax_{\qq \in \Delta_g} \ \left\{ q_0 {{b(0) + 1} \over {s(0) + 1}}  +   {\sum_{j=1}^g  {q_j {{b(j)} \over {s(j)}} } } \right\}  .\]

Observe that the denominators of each of the $g+1$ terms in this sum is positive; for the first term because of adding 1 and for
the remaining terms because of the songs in $S_j$'s.
\end{itemize}

We will next prove a sequence of lemmas which will lead to Theorem \ref{thm.hemi}.
For this purpose, it will be convenient to decompose $F$ into three correspondences, $F_1, F_2, F_3$ which will
map $(\pp, \xx, \yy)$ to the set of $\pp'$s, $\xx'$s and $\yy'$s respectively, as defined above. Also, define $F_2'$ to be
the correspondence that maps $(\pp, \xx, \yy)$ to the set of $\zz'$s, where $\zz'$ is an optimal coarse allocation at prices
$\pp$, given production $\yy$.

\begin{lemma}
\label{lem.one}
Correspondence $F_2'$ is upper hemi-continuous.
\end{lemma}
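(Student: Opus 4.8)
The plan is to recognize $F_2'$ as the demand correspondence of a standard parametric utility‑maximization problem, one for each agent, and to apply Berge's Maximum Theorem (equivalently, to verify directly that its graph is closed over the compact domain). At a point $(\pp,\xx,\yy)$ of the domain, agent $i$'s earnings are
\[ e_i(\pp,\xx,\yy) \;=\; p_0\, y_{i0} \;+\; \sum_{j=1}^g p_j\!\left( \sum_{i'\in A} x_{i'ji} \;+\; \sum_{k\in S_{ij}} \sum_{i'\in A} x_{i'jk}\right), \]
i.e. the revenue from the bread she produces plus, in each category $j$, the price $p_j$ times the number of copies sold of the songs she owns or produces. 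Since $S_{ij}$ is fixed, $e_i$ is a polynomial in the coordinates of $(\pp,\xx,\yy)$, hence continuous on the (compact, convex) domain. An optimal coarse allocation for $i$ is a maximizer of $f_i$ over the budget set $C_i(\pp,\xx,\yy)=\{\,\zz_i\in[0,1.1M]^{g+1}:\ \pp\cdot\zz_i\le e_i(\pp,\xx,\yy)\,\}$ (the cap $1.1M$ being harmless by the modified satiation assumption), and $F_2'$ is the product over $i\in A$ of the correspondences $(\pp,\xx,\yy)\mapsto \argmax_{\zz_i\in C_i(\pp,\xx,\yy)} f_i(\zz_i)$. As a finite product of correspondences on a common domain is upper hemi-continuous whenever each factor is, it suffices to treat a single agent $i$.

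Next I would check Berge's hypotheses for agent $i$. The objective $f_i$ is continuous by assumption. The correspondence $C_i$ is nonempty‑valued ($\mathbf{0}\in C_i$ always), convex‑valued, and compact‑valued (a closed subset of the cube $[0,1.1M]^{g+1}$); it is upper hemi-continuous because if $\zz_i^{(m)}\in C_i(\pp^{(m)},\xx^{(m)},\yy^{(m)})$ with $(\pp^{(m)},\xx^{(m)},\yy^{(m)})\to(\pp,\xx,\yy)$ and $\zz_i^{(m)}\to\zz_i$, then passing to the limit in $\pp^{(m)}\cdot\zz_i^{(m)}\le e_i(\pp^{(m)},\xx^{(m)},\yy^{(m)})$, using continuity of $e_i$, gives $\zz_i\in C_i(\pp,\xx,\yy)$ — so $C_i$ has closed graph. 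The delicate hypothesis is lower hemi-continuity of $C_i$: it holds provided $e_i(\pp,\xx,\yy)>0$, for then $\mathbf{0}$ strictly satisfies the budget inequality and any $\zz_i\in C_i(\pp,\xx,\yy)$ is the limit of the strictly‑feasible points $\lambda\zz_i$ as $\lambda\uparrow 1$, each of which — by continuity of $e_i$ and of $\zz_i\mapsto\pp\cdot\zz_i$ — remains feasible for all nearby parameters, so a standard diagonalization yields the required approximating sequence. With both hypotheses in hand, Berge's theorem gives that agent $i$'s correspondence is upper hemi-continuous (and compact‑valued), and the product over $i$ gives the lemma. (Alternatively one can argue the closed graph of the argmax directly: if $\zz_i^{(m)}$ is optimal for the $m$-th problem and $\zz_i^{(m)}\to\zz_i^*$, then $\zz_i^*$ is feasible for the limit problem; if some feasible $\zz_i$ had $f_i(\zz_i)>f_i(\zz_i^*)$, scale $\zz_i$ slightly toward $\mathbf{0}$ to get a strictly‑interior improving point, which is feasible for all large $m$, contradicting optimality of $\zz_i^{(m)}$.)

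\textbf{Main obstacle.} The one non-routine point is the lower‑hemi-continuity step, and with it the requirement that $e_i>0$ at every point of the domain — the classical failure mode being a parameter sequence along which the budget set abruptly collapses to a proper face, so that optimal allocations of nearby problems converge to a non-optimal point of the limit problem. This is exactly where one uses that each agent's rate of bread production is never zero (hence bounded away from zero on the compact feasible set), so every agent always has a positive fall‑back income; if the bread price were allowed to vanish on the domain one would additionally keep $p_0$ bounded below (or restrict the domain) so that $e_i>0$ persists. Everything else — continuity of $e_i$, convexity and compactness of $C_i$, and boundedness from the $1.1M$ satiation cap — is routine; note that quasi-concavity of $f_i$ is not needed here (it will instead be used to make $F_2'$ convex‑valued, as required later for Kakutani's theorem).
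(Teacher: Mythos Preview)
Your proposal is correct and takes essentially the same approach as the paper: earnings are continuous in $(\pp,\xx,\yy)$, the optimal coarse allocation is then an upper hemi-continuous correspondence of prices, production, and earnings (the paper simply asserts this; you verify it via Berge's Maximum Theorem), and composing gives the result. Your treatment is considerably more detailed than the paper's three-sentence proof, and you correctly flag the one delicate point --- lower hemi-continuity of the budget correspondence, requiring $e_i>0$ --- that the paper leaves implicit in its standing assumption that every agent has the fall-back option of producing bread.
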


\begin{proof}
First observe that for each buyer $i$, the earning of $i$ is a continuous function of $(\pp, \xx, \yy)$.
Since $i$'s coarse utility function, $f_i$, is concave and continuous, the set of $\zz_i'$s is an upper hemi-continuous
correspondence of $\pp, \ \yy$, and the earnings of $i$. Composing the two we get that the
correspondence $F_2'$ is upper hemi-continuous.
\end{proof}

\begin{lemma}
\label{lem.two}
Correspondence $F_2$ is upper hemi-continuous.
\end{lemma}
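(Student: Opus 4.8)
The plan is to exhibit $F_2$ as the composition of the correspondence $F_2'$ of Lemma~\ref{lem.one} with a single continuous function. The key observation is that, once a coarse allocation $\zz'=(z'_{ij})$ and a production vector $\yy$ have been fixed, the detailed allocation is \emph{uniquely} determined by the deterministic procedure of Section~\ref{sec.model}: for each agent $i$ and each category $j$ one walks down the fixed total order $T_{ij}$ over $S_j\cup A$ and greedily assigns songs from each successive source (a song of $S_j$ supplying one unit, an agent $k\in A$ supplying $y_{kj}$ units) until the quota $z'_{ij}$ is used up or no source remains, and whatever is left over is the excess demand $d_{ij}$; for bread one simply sets $x'_{i0}=z'_{i0}$. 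Since no residual choice is involved, this defines a function $h$ with $\xx'=h(\zz',\yy)$, and hence $F_2(\pp,\xx,\yy)=\{\,h(\zz',\yy):\zz'\in F_2'(\pp,\xx,\yy)\,\}$.

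The first step is to verify that $h$ is continuous. Fixing $i$ and $j$, let $\sigma_1,\sigma_2,\dots$ list $S_j\cup A$ in the order $T_{ij}$, let $a_\ell$ be the availability of $\sigma_\ell$ (namely $1$ if $\sigma_\ell\in S_j$ and $y_{\sigma_\ell j}$ if $\sigma_\ell\in A$), and set $C_0=0$ and $C_\ell=a_1+\dots+a_\ell$. Then the amount $i$ buys from $\sigma_\ell$ is the length of the overlap of the intervals $[0,z'_{ij}]$ and $[C_{\ell-1},C_\ell]$, that is $\max\{0,\ \min\{z'_{ij},C_\ell\}-C_{\ell-1}\}$, and the excess demand is $d_{ij}=\max\{0,\ z'_{ij}-Y_j\}$ with $Y_j=\sum_\ell a_\ell$. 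Each of these is obtained from $(\zz',\yy)$ by linear maps composed with $\min$ and $\max$, hence is continuous; the total orders $T_{ij}$ are fixed data of the instance, not variables, and $h$ does not otherwise depend on $\pp$ or $\xx$ (the ``free song'' effect enters only the budget, which is internal to $F_2'$). The one point worth dwelling on is that at the breakpoints where $z'_{ij}$ equals some $C_\ell$ the formula does not jump --- the allocation to $\sigma_\ell$ saturates exactly as the allocation to $\sigma_{\ell+1}$ begins at $0$ --- so $h$ is genuinely continuous there. This is the step I expect to be the main (if modest) obstacle.

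The remaining step is the routine fact that a continuous function applied to an upper hemi-continuous, compact-valued correspondence yields an upper hemi-continuous correspondence. By Lemma~\ref{lem.one}, $F_2'$ is upper hemi-continuous, and since its values are closed (an argmax set of a continuous function over a compact budget set) and lie in the compact cube $D^{n(g+1)}$, it has closed graph and nonempty compact values. Pairing each coarse allocation with the given $\yy$ yields an upper hemi-continuous correspondence $G(\pp,\xx,\yy)=F_2'(\pp,\xx,\yy)\times\{\yy\}$ with compact values, and $F_2=h\circ G$ is therefore upper hemi-continuous. Concretely: given $(\pp^t,\xx^t,\yy^t)\to(\pp,\xx,\yy)$ and arbitrary $\uu^t\in F_2(\pp^t,\xx^t,\yy^t)$, write $\uu^t=h(\vv^t,\yy^t)$ with $\vv^t\in F_2'(\pp^t,\xx^t,\yy^t)$; passing to a subsequence along which $\vv^t\to\vv$ (possible since $D^{n(g+1)}$ is compact), the closed graph of $F_2'$ gives $\vv\in F_2'(\pp,\xx,\yy)$, and continuity of $h$ gives $\uu^t\to h(\vv,\yy)\in F_2(\pp,\xx,\yy)$, which is exactly the sequential criterion for upper hemi-continuity of $F_2$.
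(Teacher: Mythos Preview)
Your proof is correct and follows the same approach as the paper: both argue that the passage from a coarse allocation (together with the production vector $\yy$) to the detailed allocation is a single-valued continuous map, and then compose with the upper hemi-continuous $F_2'$ of Lemma~\ref{lem.one}. You supply considerably more detail than the paper does --- the explicit overlap formula $\max\{0,\min\{z'_{ij},C_\ell\}-C_{\ell-1}\}$, the check at the breakpoints, and the explicit dependence on $\yy$ via the pairing $G=F_2'\times\{\yy\}$ --- whereas the paper simply asserts that ``this map is continuous'' and invokes Lemma~\ref{lem.one}.
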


\begin{proof}
For conventinal goods, the detailed allocation is the same as the coarse allocation. For a digital category, the coarse allocation is a
number; this, together with the total order, yields a vector of dimension $n + s$, i.e., the detailed allocation. Clearly, this map is
continuous. Hence, the detailed allocation is a continuous function of the coarse allocation. This fact together with 
Lemma \ref{lem.one} yields the current lemma.
\end{proof}

\begin{lemma}
\label{lem.three}
Correspondence $F_3$ is upper hemi-continuous.
\end{lemma}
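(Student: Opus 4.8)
The plan is to realize $F_3$ as a Cartesian product of per-agent best-response correspondences and to show that each of these is upper hemi-continuous via the Maximum Theorem (Berge). For $i \in A$ let $P_i \subseteq \Rplus^{g+1}$ denote $i$'s fixed compact, convex set of feasible production points; by the choice of $M$ we may assume $P_i \subseteq D^{g+1}$. Let $e_i(\pp, \xx, \yy_{-i}, w)$ denote the earnings of agent $i$ when she produces $w \in P_i$, prices are $\pp$, the detailed allocations are $\xx$, and the remaining agents produce $\yy_{-i}$, and set $B_i(\pp, \xx, \yy) = \argmax\{ e_i(\pp, \xx, \yy_{-i}, w) : w \in P_i \}$, so that $F_3(\pp, \xx, \yy) = \prod_{i \in A} B_i(\pp, \xx, \yy)$.

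First I would observe that $e_i$ is jointly continuous in $(\pp, \xx, \yy_{-i}, w)$. This is the same continuity already invoked at the start of the proof of Lemma~\ref{lem.one}: $i$'s earnings equal $p_0 w_0 + \sum_{j=1}^{g} p_j z_{ij}$, a finite sum of products of a price with an ``amount sold'', and the amount sold of each song owned or produced by $i$ is obtained from $\xx$ and $\yy$ by the operations $\max$ and $\min$ (precisely the operations defining the excess supplies $l_{kj}$ and excess demands $d_{ij}$, together with the reconciling role of the market maker), all of which are continuous. Hence $e_i$ is continuous on the whole domain of $F$.

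Next, since the constraint set $P_i$ does not vary with $(\pp, \xx, \yy)$ at all, it is a (trivially) continuous, nonempty, compact-valued constraint correspondence, so the Maximum Theorem applied to the continuous objective $e_i$ over $P_i$ shows that $B_i$ is nonempty, compact-valued, and upper hemi-continuous (it is moreover convex-valued, since $e_i$ is concave in $w$ and $P_i$ is convex -- a fact needed later for Kakutani). A finite Cartesian product of compact-valued upper hemi-continuous correspondences into a compact space is again upper hemi-continuous, and $D^{n(g+1)}$ is compact; therefore $F_3 = \prod_{i \in A} B_i$ is upper hemi-continuous. Equivalently, one may verify the closed-graph property directly: if $(\pp^t, \xx^t, \yy^t) \to (\pp, \xx, \yy)$, $\uu^t \in F_3(\pp^t, \xx^t, \yy^t)$ and $\uu^t \to \uu$, then passing to the limit in the inequalities $e_i(\pp^t, \xx^t, \yy^t_{-i}, \uu^t_i) \geq e_i(\pp^t, \xx^t, \yy^t_{-i}, w)$ for $w \in P_i$, using continuity of $e_i$, yields $\uu \in F_3(\pp, \xx, \yy)$.

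The step requiring the most care is the continuity of $e_i$, i.e., making precise how ``the number of copies of $i$'s songs that are sold'' depends jointly continuously on $\xx$ and $\yy$ through the $\max/\min$ structure of the $l_{kj}$'s and $d_{ij}$'s and the market maker's accounting; once that is granted, the Maximum Theorem and the product step are routine.
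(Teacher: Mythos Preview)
Your proposal is correct and follows essentially the same approach as the paper, which dispatches the lemma in a single line: ``This follows from the assumption that for each agent, the set of feasible production points is a compact, convex set.'' You have simply made explicit the Maximum Theorem argument (continuous earnings objective, fixed compact feasible set, then product over agents) that the paper leaves implicit in that sentence.
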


\begin{proof}
This follows from the assumption that for each agent, the set of feasible production points is a compact, convex set. 
\end{proof}

\begin{lemma}
\label{lem.four}
The set of songs produced by the market maker and the set of songs bought by the market maker are continuous 
functions of the coarse demand and the production of all agents.
\end{lemma}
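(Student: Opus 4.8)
The plan is to observe that both quantities of interest are built from the inputs --- the coarse demand $(x_{ij})$ and the production $\yy$ --- by finitely many operations that each preserve continuity: finite sums, the unary operation $t \mapsto \max\{0,t\}$, finite maxima, and the detailed-allocation map. For a category $j$ the market maker produces $\max_{i \in A}\{d_{ij}\}$ extra songs, and for a song $k \in (A \cup S_j)$ it buys $l_{kj} = y_{kj} - \max_{i \in A}\{x_{ijk}\}$; so it suffices to show that each of these (as a function of $(x_{ij})$ and $\yy$) is continuous.

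First I would handle the production side. Since $Y_j = \sum_{k \in (A \cup S_j)} y_{kj}$ is a linear, hence continuous, function of $\yy$, the excess demand $d_{ij} = \max\{0, x_{ij} - Y_j\}$ is continuous in $(x_{ij}, \yy)$ as a composition of continuous maps. Taking the maximum over the $n$ agents, $\max_{i \in A}\{d_{ij}\}$ is a finite maximum of continuous functions and therefore continuous; this is exactly the amount produced by the market maker in category $j$.

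Next the buying side, which is where the only real care is needed. The quantity $x_{ijk}$ is a coordinate of the detailed allocation, obtained by ``water-filling'' the coarse amount $x_{ij}$ along the fixed total order $T_{ij}$, where the fraction available from each agent ranked above $k$ is that agent's production of category-$j$ songs, and the songs of $S_j$ contribute constant amounts $1$. As in the proof of Lemma \ref{lem.two}, this map is a composition of $\min$'s, $\max$'s and sums applied to $x_{ij}$ and to the coordinates of $\yy$, hence continuous; the potential worry --- that when $x_{ij}$ exactly exhausts the agents ranked above $k$ the value of $x_{ijk}$ jumps --- does not arise, precisely because $\min$ and $\max$ are continuous and the order $T_{ij}$ is a fixed combinatorial object that does not change with $(x_{ij}, \yy)$. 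Given continuity of each $x_{ijk}$, the excess supply $l_{kj} = y_{kj} - \max_{i \in A}\{x_{ijk}\}$ is continuous, once more as a finite maximum composed with subtraction.

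The main obstacle, such as it is, is this last continuity of the detailed-allocation coordinates $x_{ijk}$ in the production $\yy$ (not merely in the coarse demand); but this is essentially already contained in Lemma \ref{lem.two}, and the present lemma simply repackages it together with the manifestly continuous definitions of $Y_j$, $d_{ij}$ and $l_{kj}$. I would therefore keep the write-up short, citing Lemma \ref{lem.two} for the detailed-allocation step and spelling out only the elementary compositions for $d_{ij}$, $\max_{i}\{d_{ij}\}$ and $l_{kj}$.
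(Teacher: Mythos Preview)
Your proposal is correct and takes essentially the same approach as the paper: the paper's own proof reduces the lemma to the two observations that the excess demand $d_{ij}$ and the excess supply $l_{kj}$ are continuous in the coarse demand and the production, and you spell out precisely why (finite sums, $\max\{0,\cdot\}$, finite maxima, and the continuity of the detailed-allocation map from Lemma~\ref{lem.two}). Your version is more explicit---in particular you make clear that continuity of $l_{kj}$ rests on the continuity of the detailed-allocation coordinates $x_{ijk}$ in both $x_{ij}$ and $\yy$---but the logical route is identical.
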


\begin{proof}
The lemma follows from the following two observations:
\begin{itemize}
\item
For each agent $i$ and category $j$, the excess demand of $i$ for $j$ is a continuous 
function of the coarse demand and the production of all agents.
\item
For each song $k$ and category $j$, the excess supply of $k$ is a continuous 
function of the coarse demand and the production of all agents.
\end{itemize}
\end{proof}

Given functions $b, s: \ \{0, 1, \ldots g \} \ \ra \Rplus$ define correspondence $F_1'$ to be the set of all $\pp'$ satisfying:
\[ \pp'  =  \argmax_{\qq \in \Delta_g} \ \left\{ q_0 {{b(0) + 1} \over {s(0) + 1}}  +   {\sum_{j=1}^g  {q_j {{b(j)} \over {s(j)}} } } \right\}  .\]
Note that $b$ and $s$ are meant to be the functions defined above, which represent the total number of songs 
(amount of bread) bought and sold, respectively.

\begin{lemma}
\label{lem.five}
Correspondence $F_1'$ is upper hemi-continuous.
\end{lemma}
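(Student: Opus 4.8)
Lemma \ref{lem.five} states that the correspondence $F_1'$, which maps the pair of functions $(b, s)$ to the argmax of a linear objective over the simplex $\Delta_g$, is upper hemi-continuous. Here is how I would prove it.

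The plan is to reduce the claim to the standard fact (Berge's maximum theorem) that the argmax correspondence of a continuous objective over a fixed compact set is upper hemi-continuous. First I would observe that the only dependence on $(b, s)$ enters through the coefficient vector
\[
c \ = \ \left( \frac{b(0) + 1}{s(0) + 1}, \ \frac{b(1)}{s(1)}, \ \ldots, \ \frac{b(g)}{s(g)} \right) \ \in \ \Rplus^{g+1},
\]
so $F_1'(b,s) = \argmax_{\qq \in \Delta_g} \langle c, \qq \rangle$. Since $b$ and $s$ range over $\{0,1,\ldots,g\} \to \Rplus$ and, as the excerpt already notes, every denominator is strictly positive ($s(0)+1 \geq 1$ always, and each $s(j)$ for $j \geq 1$ is bounded below by $|S_j| \geq 1$ because $S_j$ is non-empty), the map $(b,s) \mapsto c$ is continuous on the relevant domain. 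Hence it suffices to show that $\qq \mapsto \argmax_{\qq \in \Delta_g} \langle c, \qq \rangle$ is upper hemi-continuous in $c$; composing a continuous function with a u.h.c.\ correspondence yields a u.h.c.\ correspondence, which closes the lemma.

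For the reduced claim I would invoke Berge's maximum theorem directly: the objective $(\qq, c) \mapsto \langle c, \qq \rangle$ is jointly continuous, the feasible set $\Delta_g$ is compact and does not vary with $c$, so the set of maximizers is a nonempty, compact-valued, upper hemi-continuous correspondence of $c$. Alternatively, one can give a hands-on argument: the maximum of a linear functional over the simplex is attained at the vertex (or face) indexed by the coordinates achieving $\max_j c_j$, and a routine closedness check shows that if $c_n \to c$, $\qq_n \to \qq$ with $\qq_n \in \argmax(c_n)$, then $\langle c, \qq\rangle = \lim \langle c_n, \qq_n\rangle \geq \lim \langle c_n, \qq'\rangle = \langle c, \qq'\rangle$ for every $\qq' \in \Delta_g$, so $\qq \in \argmax(c)$; together with compactness of $\Delta_g$ this is exactly upper hemi-continuity.

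I do not expect a real obstacle here — the statement is essentially a packaging of Berge's theorem. The only point that needs a word of care is confirming that the coefficient map is genuinely continuous, i.e.\ that the denominators stay bounded away from zero uniformly; this is precisely the observation flagged in the bulleted remark after the definition of $F$ (adding $1$ in the bread term, and the guaranteed presence of initial songs in each $S_j$ for the digital terms), so no new work is required. Everything else is the standard argmax-of-a-linear-functional-over-a-fixed-polytope fact.
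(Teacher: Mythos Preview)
Your proof is correct and follows essentially the same approach as the paper: the paper observes that $\pp'$ is the set of optimal solutions to an LP whose objective depends on $b,s$ but whose feasible region is the fixed compact set $\Delta_g$, and concludes upper hemi-continuity from that. You spell this out more carefully by factoring through the continuous coefficient map $(b,s)\mapsto c$ and invoking Berge's theorem, but the underlying idea is identical.
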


\begin{proof}
Observe that $\pp'$s are the optimal solutions to an LP whose objective function is obtained from $b$ and $s$
but whose constraints are fixed, i.e., $\pp' \in \Delta_g$. Upper hemi-continuity follows from the fact that the feasible
region of this LP is a compact set.
\end{proof}

\begin{lemma}
\label{lem.six}
Correspondence $F_1$ is upper hemi-continuous.
\end{lemma}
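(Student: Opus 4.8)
The plan is to recognize $F_1$ as the composition of a continuous single-valued map with the correspondence $F_1'$ of Lemma~\ref{lem.five}. Concretely, define $\phi: \Delta_g \times D^{n(g+1)(n+s)} \times D^{n(g+1)} \ra \Rplus^{g+1} \times \Rplus^{g+1}$ by $\phi(\pp, \xx, \yy) = (b, s)$, where $b = (b(0), b(1), \ldots, b(g))$ and $s = (s(0), s(1), \ldots, s(g))$ are exactly the ``bought'' and ``sold'' quantities defined just before Lemma~\ref{lem.equal}. Then by construction $F_1(\pp, \xx, \yy) = F_1'(\phi(\pp, \xx, \yy))$, so it suffices to show (i) $\phi$ is continuous, and (ii) the composition of a continuous function with an upper hemi-continuous correspondence is again upper hemi-continuous.

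For step (i), note that $b(0) = \sum_{i} x_{i0}$ and $s(0) = \sum_{i} y_{i0}$ are linear, hence continuous, in $(\xx, \yy)$. For a digital category $j$, $b(j) = \sum_{i} (d_{ij} + \sum_{k} x_{ijk})$ and $s(j) = \sum_{k} (l_{kj} + \sum_{i} x_{ijk})$; the terms $\sum_{k} x_{ijk}$ and $\sum_{i} x_{ijk}$ are linear in $\xx$, while $d_{ij} = \max\{0,\, x_{ij} - Y_j\}$ and $l_{kj} = y_{kj} - \max_{i} x_{ijk}$ are continuous in $(\xx, \yy)$ as pointwise maxima of continuous affine functions --- this is essentially the content of Lemma~\ref{lem.four}, which says the market maker's output and purchases depend continuously on the coarse demand and the productions. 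Since finite sums of continuous functions are continuous, $\phi$ is continuous.

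For step (ii), I would invoke the standard closed-graph characterization: because the codomain $\Delta_g$ is compact, $F_1'$ is upper hemi-continuous (Lemma~\ref{lem.five}) exactly when its graph is closed; the graph of $F_1' \circ \phi$ is the preimage of the graph of $F_1'$ under the continuous map $(\phi, \mathrm{id})$, hence closed; and closed graph together with compactness of $\Delta_g$ yields upper hemi-continuity of $F_1$. (Equivalently, one can argue directly with sequences: given $(\pp^{(m)},\xx^{(m)},\yy^{(m)}) \to (\pp,\xx,\yy)$ and $\pp'^{(m)} \in F_1(\pp^{(m)},\xx^{(m)},\yy^{(m)})$ with $\pp'^{(m)} \to \pp'$, continuity of $\phi$ gives $\phi(\pp^{(m)},\xx^{(m)},\yy^{(m)}) \to \phi(\pp,\xx,\yy)$, and Lemma~\ref{lem.five} then forces $\pp' \in F_1'(\phi(\pp,\xx,\yy)) = F_1(\pp,\xx,\yy)$.)

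The one point that needs a word of care --- and the closest thing to an obstacle --- is that $F_1'$ is only well-behaved where the denominators $s(0)+1$ and $s(j)$ for $1 \le j \le g$ are strictly positive, so that the LP objective is a genuine continuous function of its parameters there. But $s(0)+1 \ge 1$ always, and for each digital category $s(j) \ge |S_j| \ge 1$, since every $k \in S_j$ contributes $l_{kj} + \sum_{i} x_{ijk} = 1 - \max_{i} x_{ijk} + \sum_{i} x_{ijk} \ge 1$ to $s(j)$ (using $y_{kj} = 1$ for $k \in S_j$). Hence $\phi$ actually lands in the region on which $F_1'$ is upper hemi-continuous, nothing degenerates, and the composition argument of steps (i)--(ii) goes through, giving the lemma. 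The remaining work is purely bookkeeping --- verifying that each summand of $b$ and $s$, including the market-maker corrections, is continuous --- rather than anything deep.
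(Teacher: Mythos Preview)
Your proof is correct and follows essentially the same route as the paper: factor $F_1$ as $F_1' \circ \phi$, invoke Lemma~\ref{lem.four} for continuity of the inner map, Lemma~\ref{lem.five} for upper hemi-continuity of the outer correspondence, and conclude by composition. Your version is in fact tidier than the paper's, which phrases $\phi$ as an ``upper hemi-continuous correspondence'' rather than a continuous single-valued map, and your explicit check that $s(j)\ge |S_j|\ge 1$ (hence the LP coefficients stay continuous) fills in a detail the paper only notes in passing when defining $F$.
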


\begin{proof}
First consider the function $b$ representing the total number of songs (amount of bread) bought, given allocations
$\xx$ and production $\yy$. It is easy to see that Lemma \ref{lem.four} implies that the set of all such $b$s is an 
upper hemi-continuous correspondence. A similar statement holds for the set of all functions $s$.

Now, Lemma \ref{lem.five} and the fact that the composition of upper hemi-continuous correspondences is an
upper hemi-continuous correspondence proves the current lemma.
\end{proof}

\begin{theorem}
\label{thm.hemi}
Correspondence $F$ is  upper hemi-continuous and has nonempty, compact and convex values.
\end{theorem}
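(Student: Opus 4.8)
The plan is to exploit the product structure of $F$: a triple $(\pp',\xx',\yy')$ lies in $F(\pp,\xx,\yy)$ precisely when $\pp' \in F_1(\pp,\xx,\yy)$, $\xx' \in F_2(\pp,\xx,\yy)$ and $\yy' \in F_3(\pp,\xx,\yy)$, and these three coordinate correspondences act on the same input independently. Hence $F$ inherits each of the four required properties coordinatewise, and it suffices to check nonemptiness, compactness and convexity of the values of $F_1$, $F_2$, $F_3$ separately. Upper hemi-continuity of $F$ then follows from Lemmas \ref{lem.two}, \ref{lem.three} and \ref{lem.six} together with the standard fact that the graph of $F$ is the intersection of the pulled-back graphs of the $F_i$; since each $F_i$ is upper hemi-continuous into the compact domain $\Delta_g \times D^{n(g+1)(n+s)} \times D^{n(g+1)}$, each such graph is closed, so $\mathrm{Gr}(F)$ is closed, and a closed-graph correspondence into a compact space is upper hemi-continuous.

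Nonemptiness: $F_1$ selects a maximizer of a continuous function over the compact simplex $\Delta_g$ (the denominators being strictly positive, as observed right after the definition of $F$), so its value is nonempty; $F_3$ selects, for each agent, a maximizer of her earnings — a continuous function of $\yy_i$ — over her compact, convex feasible production set, hence is nonempty; $F_2$ is nonempty because each agent's budget set is a closed, bounded, nonempty (it contains the zero allocation) subset of $D^{g+1}$ and $f_i$ is continuous, so an optimal coarse allocation exists, and the detailed allocation is obtained from it by the deterministic greedy process. Compactness: each value is a closed subset of a compact set — the argmax set of a continuous function over a compact set for $F_1$ and $F_3$, and for $F_2$ the image, under the continuous greedy coarse-to-detailed map of Lemma \ref{lem.two}, of the compact set of optimal coarse allocations.

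Convexity is where the real work lies. For $F_1$, once $b$ and $s$ are fixed the objective is \emph{linear} in $\qq$, so the set of maximizers over $\Delta_g$ is a face of the simplex and hence convex. For $F_3$, with $\pp$ and $\xx$ fixed, agent $i$'s earnings are built from the fixed data via linear terms in $y_{i0}$ and terms of the form $\min\{y_{ij}, c_{ij}\}$ with $c_{ij}$ determined by $\xx$, hence are a concave function of $\yy_i$; its set of maximizers over the compact convex feasible set is therefore convex, and since $\yy'$ ranges over the product of these sets over $i$, $F_3$ is convex-valued. The delicate case is $F_2$: the set of optimal \emph{coarse} allocations is convex, being the argmax of the quasi-concave $f_i$ over a convex budget set, but the greedy coarse-to-detailed map, although continuous, is only piecewise linear in each $z_{ij}$ — it fills the sources of category $j$ one at a time in the order $T_{ij}$ — so the image of a convex set of coarse allocations need not be convex once an optimal interval of values of $z_{ij}$ straddles a breakpoint of the greedy fill.

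I therefore expect the convexity of the values of $F_2$ to be the main obstacle. The cleanest way around it that I would pursue is to run Kakutani on the variant of $F$ whose second coordinate returns an optimal \emph{coarse} allocation $\zz'$ — this is convex-valued by the argument above and upper hemi-continuous by Lemma \ref{lem.one} — and then to recover the detailed allocation of the eventual equilibrium from the fixed point by one further application of the (single-valued, continuous) greedy map, checking that all market-clearing conditions can be read off at that stage. Alternatively, one would argue that the analysis can be localized, or $f_i$ perturbed, so that the set of optimal coarse allocations lies within a single linear piece of the greedy map in each digital coordinate, on which that map is affine and thus preserves convexity.
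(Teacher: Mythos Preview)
Your decomposition into $F_1,F_2,F_3$, the appeal to Lemmas \ref{lem.two}, \ref{lem.three}, \ref{lem.six} for upper hemi-continuity, and the coordinatewise verification of convexity are exactly the paper's route; the paper likewise deduces compactness from upper hemi-continuity and dismisses nonemptiness as obvious. Your handling of $F_1$ and $F_3$ matches the paper's (you are in fact more careful on $F_3$: the paper calls the earnings linear in $\yy_i$, while you correctly note that the sales caps coming from $\xx$ make them only concave --- same conclusion either way).

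The one substantive divergence is precisely where you put it. For the convexity of $F_2$, the paper's entire argument is the sentence ``It is easy to see that the mapping from coarse allocations to detailed allocations preserves this convexity.'' You are right to flag this: the greedy fill along $T_{ij}$ is piecewise linear in $z_{ij}$, and if the interval of optimal $z_{ij}$ values crosses a capacity threshold the image is a polygonal path that changes coordinate direction and is not convex. So the step the paper waves through is exactly the one you identify as problematic, and your objection is sound.

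Your first proposed repair is the clean one and costs nothing: run Kakutani on $(\pp,\zz,\yy)$ with the \emph{coarse} allocation in the second slot --- this is convex-valued by quasi-concavity of $f_i$ and upper hemi-continuous by Lemma \ref{lem.one} --- and invoke the single-valued continuous greedy map only when the detailed allocation is actually needed, namely inside the definitions of earnings, $b(j)$ and $s(j)$. All the hemi-continuity lemmas go through verbatim, a fixed point $(\pp^*,\zz^*,\yy^*)$ exists, and the equilibrium detailed allocation is the greedy image of $\zz^*$; the market-clearing argument of Theorem \ref{thm.exists} then reads off from $b(j)=s(j)$ unchanged. Your second repair (perturb $f_i$ to force a unique optimal coarse allocation) also works but is heavier than necessary.
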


\begin{proof}
Non-emptiness is obvious and
upper hemi-continuity of $F$ follows from Lemmas \ref{lem.two}, \ref{lem.three} and \ref{lem.six}. 

We will prove that the correspondences $F_1, F_2$ and $F_3$ have convex values, hence proving it for $F$. $F_1$ is convex-valued because the set of prices are
solutions to an LP, as shown in Lemma \ref{lem.five}. Next, observe that correspondence $F_2'$ has convex values, since
coarse allocations maximize a concave function, i.e., the utility function, subject to budget constraints. It is easy to see
that the mapping from coarse allocations to detailed allocations preserves this convexity and hence $F_2$ is convex-valued.
Finally, optimal production is maximizing a linear function, i.e., the earnings of each agent, subject to being in the convex
set of feasible production points. Hence, the set of optimal productions also form a convex set and hence $F_3$ is convex-valued.

Finally, compactness of the values of $F$ follows from its upper hemi-continuity.
\end{proof}

\subsection{The fixed points of $F$ are equilibria}

We show below that the set of fixed points of $F$ captures exactly the set of equilibria of market $\CM$.

\begin{theorem}
\label{thm.exists}
$(\pp^*, \xx^*, \yy^*)  \in  F(\pp^*, \xx^*, \yy^*)$ iff $(\pp^*, \xx^*, \yy^*)$ is an equilibrium of market $\CM$.
\end{theorem}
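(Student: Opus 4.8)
The plan is to prove the biconditional by treating the two directions separately. The direction ``equilibrium $\Rightarrow$ fixed point'' is bookkeeping: for an equilibrium $(\pp^*,\xx^*,\yy^*)$, Condition~2 gives $\xx^*\in F_2(\pp^*,\xx^*,\yy^*)$ and Condition~1 gives $\yy^*\in F_3(\pp^*,\xx^*,\yy^*)$, so only $\pp^*\in F_1(\pp^*,\xx^*,\yy^*)$ needs checking. Market clearing gives $b(0)=s(0)$ and $x_{ij}\le Y_j$ for all $i$ and all digital $j$, so all $d_{ij}=0$, and $p_j^*>0$ forces all $l_{kj}=0$; via the identity $b(j)-s(j)=\sum_i d_{ij}-\sum_k l_{kj}$ this makes $c_0:=\frac{b(0)+1}{s(0)+1}=1$, makes $c_j:=\frac{b(j)}{s(j)}\le 1$ for all digital $j$, and makes $c_j=1$ whenever $p_j^*>0$. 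Hence $\sum_j q_j c_j\le\sum_j q_j=1$ for every $\qq\in\Delta_g$ with equality at $\pp^*$, so $\pp^*$ maximizes the $F_1$-objective.

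For ``fixed point $\Rightarrow$ equilibrium'', assume $(\pp^*,\xx^*,\yy^*)\in F(\pp^*,\xx^*,\yy^*)$; Conditions~1 and~2 are again free, so the content is market clearing (Condition~3). The key input is Walras' law: each agent spends exactly her earnings and a unit of category $j$ changes hands at price $p_j^*$, so summing budget balance over agents yields $\sum_{j=0}^g p_j^*\,b(j)=\sum_{j=0}^g p_j^*\,s(j)$. Setting $\sigma_0=s(0)+1$ and $\sigma_j=s(j)$ for $j\ge 1$ (all positive: for bread because of the ``$+1$'' and for a digital category because $s(j)\ge|S_j|\ge 1$), one has $\sigma_j(c_j-1)=b(j)-s(j)$, so Walras' law reads $\sum_{j=0}^g p_j^*\sigma_j(c_j-1)=0$.

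Now invoke the $F_1$ fixed-point condition: $\pp^*$ maximizes the linear functional $\qq\mapsto\sum_j q_j c_j$ over $\Delta_g$, hence is supported on $\{j:c_j=\max_\ell c_\ell\}$. If $\max_\ell c_\ell>1$ then $\sum_j p_j^*\sigma_j(c_j-1)=(\max_\ell c_\ell-1)\sum_{c_j=\max_\ell c_\ell}p_j^*\sigma_j>0$, contradicting Walras' law, so $c_j\le 1$ for all $j$; feeding this back, every summand $p_j^*\sigma_j(c_j-1)$ is $\le 0$ while the sum is $0$, so each vanishes and $p_j^*>0\Rightarrow c_j=1$. To finish, translate these facts into Condition~3. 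For a digital category $j$, $c_j\le 1$ means $\sum_i d_{ij}\le\sum_k l_{kj}$; since the proof of Lemma~\ref{lem.equal} shows a positive excess demand forces all excess supplies to zero, this forces all $d_{ij}=0$, i.e.\ $x_{ij}\le Y_j$; and if $p_j^*>0$ then $c_j=1$, i.e.\ $b(j)=s(j)$, so Lemma~\ref{lem.equal} gives all $l_{kj}=0$, i.e.\ every song of category $j$ is fully sold. For bread one first rules out $p_0^*=0$ (otherwise bread, being free, is demanded to its satiation level by every agent, so $b(0)=1.1Mn>M\ge s(0)$, contradicting $c_0\le 1$), whence $c_0=1$ and $b(0)=s(0)$.

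The step I expect to be the real obstacle is pinning down Walras' law in the presence of the fictitious market maker: one must adopt the convention that at a non-equilibrium point an agent's earnings include the copies the market maker buys from her and her spending includes the copies it sells to her, so that per-agent budget balance aggregates cleanly to $\sum_j p_j^* b(j)=\sum_j p_j^* s(j)$ (the identity $b(j)-s(j)=\sum_i d_{ij}-\sum_k l_{kj}$ is what keeps this consistent). The secondary nuisance is that bread has no market maker and an extra ``$+1$'' in its term, so $p_0^*>0$ must be established on its own before $b(0)=s(0)$ follows.
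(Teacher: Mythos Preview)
Your proof is correct and follows essentially the same route as the paper's: both directions hinge on Walras' law (total money spent $=$ total money earned, with the market maker's purchases and sales folded into earnings and spending exactly as you anticipate), the $\arg\max$ characterization of $\pp^*$, the non-satiation bound at $1.1M$ to rule out zero prices, and Lemma~\ref{lem.equal} to translate $b(j)=s(j)$ into the absence of excess demand and supply. The only cosmetic difference is in how the ``money'' argument is packaged: the paper sets $\alpha=\max_j c_j$, argues that total spending equals $\alpha$ times total earnings (plus a correction term for bread), and rules out $\alpha>1$ and $\alpha<1$ separately before showing \emph{every} $p_j^*>0$; you instead derive $\max_j c_j\le 1$ in one stroke and then use the ``sum of nonpositive terms equals zero'' trick to get $p_j^*>0\Rightarrow c_j=1$, handling the possibility $p_j^*=0$ for digital categories directly rather than ruling it out. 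Your treatment of the $(\Leftarrow)$ direction is in fact slightly more careful than the paper's, which tacitly assumes all equilibrium prices are positive when asserting $b(j)=s(j)$ for every $j$.
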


\begin{proof}
($\Rightarrow$) \  The definition of correspondence $F$ implies the first 2 conditions of equilibrium, i.e., allocation and production are optimal.
Hence we only need to establish the market clearing condition.

Let
\[  \alpha  =  \max \left\{ {{b(0) + 1} \over {s(0) + 1}} , \  {{b(1)} \over {s(1)}}, \ {{b(2)} \over {s(2)}}, \ \ldots , \  {{b(g)} \over {s(g)}}  \right\} . \]
From the manner in which $\pp'$ is obtained in the definition of correspondence $F$, it follows that if 
$p^*_0 > 0$ then $(b(0) + 1) / (s(0) + 1) = \alpha$ and for $1 \leq j \leq g$, if $p^*_j > 0$ then $(b(j) / s(j))  = \alpha$.

If $p^*_0 > 0$, we get $p^*_0 b(0) = \alpha p^*_0 s(0) + (\alpha - 1) p^*_0$.  Observe that $p^*_0 b(0)$ is the total money
spent on bread and $p^*_0 s(0)$ is the total money earned from bread.

For $1 \leq j \leq g$, if $p^*_j > 0$ then $p^*_0 b(j)  =  \alpha p^*_0 s(j)$.
With the involvement of the market maker, the l.h.s. of this equation is the total money spent on digital category $j$
and the r.h.s. is $\alpha$ times the total money earned from this category. Adding up all these equations, the l.h.s. will be the
total amount of money spent by all agents under prices, allocation and production given by $(\pp^*, \xx^*, \yy^*)$.
The r.h.s. will be $\alpha$ times the total money earned by all agents; in addition, if $p^*_0 > 0$, the r.h.s. will have the additional
term $(\alpha - 1) p^*_0$.  

Now, if $\alpha > 1$, the total money spent will exceed the total money earned. However, by definition, an agent cannot spend more
than she earns, leading to a contradiction. If $\alpha < 1$, the total money spent will be strictly smaller than the total money earned.
If so, there is an agent who is not spending all the money she earns.
Because of the assumption of non-satiation, which applies while her allocation lies within $D^{g+1}$,
she is not getting an optimal allocation, leading to a contradiction. Hence we get that $\alpha = 1$.

Consider a good $j$ with $j \neq 0$ and $p^*_j = 0$. Now $b(j)/s(j)$ cannot exceed the maximum value, i.e. $\alpha$,
hence $b(j) \leq \alpha s(j)$. Since $\alpha = 1$, $b(j) \leq s(j)$. However, observe that
by non-satiation, each buyer will desire $1.1 M$ units of $j$. On the other hand, since at most $M$ units
of $j$ are available for sale, $b(j) > s(j)$, leading to a  contradiction. If $p^*_0 = 0$, we must have $b(0) + 1 \leq s(0) + 1$, leading
to the same contradiction. Therefore, for all goods $j$, $p^*_j >0$. Hence, for all goods $j$, $b(j) = s(j)$.

For the case of bread, this directly implies market clearing. For a digital category, we get from Lemma \ref{lem.equal} that there is no
excess demand or excess supply w.r.t. this category. Therefore, even without the intervention of the market maker, the market clears for this good.

($\Leftarrow$) \  Assume that $(\pp^*, \xx^*, \yy^*)$ is an equilibrium of market $\CM$.
Because of the market clearing condition, there is no excess demand or supply of any good and therefore
for each $j$, $d(j) = s(j)$. Hence any $\pp' \in \Delta_g$ maximizes the expression in the definition of $F$, as does $\pp^*$.
This together with optimality of allocations and production guaranteed by equilibrium give 
$(\pp^*, \xx^*, \yy^*)  \in  F(\pp^*, \xx^*, \yy^*)$.
\end{proof}

\section{Welfare Theorems for the Conventional Part of the Mixed Economy}
\label{sec.welfare}

In the Introduction, we have provided basic reasons due to which the Welfare Theorems are not applicable as such to digital goods. 
Below we salvage both Welfare Theorems to the extent possible -- we show that the allocations of conventional goods are not rendered inefficient 
or ``socially unfair'' in the mixed economy.

Consider an equilibrium for a market $\CM$ in our model and let $u_1, \ldots u_n$ be the utilities accrued by the agents,
as given by their coarse utility functions. Fix the production and allocation of digital goods as per this equilibrium
and consider all possible ways of changing the production of conventional goods (as long as each agent is operating as a feasible production
point) and allocation of available conventional goods among agents. If the utilities accrued in this manner never Pareto dominate 
$u_1, \ldots u_n$  then we will say that utilities $u_1, \ldots u_n$ are {\em partially Pareto optimal}.

\begin{theorem}
\label{thm.first}
{\bf (First Welfare Theorem)}
The utilities accrued in any equilibrium of our model are partially Pareto optimal.
\end{theorem}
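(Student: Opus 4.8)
The plan is to run the classical proof of the First Welfare Theorem, modified so that the only coordinate allowed to move is bread while the entire digital part of the economy stays frozen at its equilibrium values. Fix an equilibrium $(\pp^*, \xx^*, \yy^*)$, write $u_i = f_i(\zz_i^*)$, and suppose toward a contradiction that there is a partially-dominating reallocation: feasible bread productions $\hat y_{i0}$ (with $(\hat y_{i0}, y_{i1}^*, \dots, y_{ig}^*)$ in agent $i$'s production set) together with a bread allocation $(\hat z_{i0})_i$ satisfying $\sum_i \hat z_{i0} \le \sum_i \hat y_{i0}$, such that the coarse allocations $\hat\zz_i := (\hat z_{i0}, z_{i1}^*, \dots, z_{ig}^*)$ satisfy $f_i(\hat\zz_i) \ge u_i$ for all $i$ and $f_i(\hat\zz_i) > u_i$ for at least one $i$.

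The first step is the production side: I claim $\hat y_{i0} \le y_{i0}^*$ for every $i$. By equilibrium Condition~1, $\yy_i^*$ maximizes $i$'s earnings given the allocation $\xx^*$ and the digital productions of everyone; holding $i$'s digital production at $(y_{i1}^*,\dots,y_{ig}^*)$ leaves the songs $i$ offers and the number of copies of them bought (read off from $\xx^*$) unchanged, so $i$'s digital revenue is a constant and her total earnings equal $p_0^* y_{i0}$ plus that constant. Since $p_0^* > 0$ at any equilibrium (shown inside the proof of Theorem~\ref{thm.exists}), maximality forces $\hat y_{i0} \le y_{i0}^*$, and therefore $\sum_i \hat z_{i0} \le \sum_i \hat y_{i0} \le \sum_i y_{i0}^* = s(0)$.

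The second step is the standard consumption-side inequality, with bundles priced at $\pp^*$ using \emph{list} prices (so that, since each agent spends all her earnings, $i$'s list-price expenditure equals her list-price earnings, self-purchases of owned or produced songs cancelling on the two sides). Because $\hat\zz_i$ and $\zz_i^*$ agree in every digital coordinate, $\cost(\hat\zz_i) - \cost(\zz_i^*) = p_0^*(\hat z_{i0} - z_{i0}^*)$, the digital terms being literally identical. By Condition~2, $\zz_i^*$ maximizes $f_i$ over $i$'s budget set, so $f_i(\hat\zz_i) > u_i$ gives $\cost(\hat\zz_i) > \cost(\zz_i^*)$ and $f_i(\hat\zz_i) = u_i$ gives $\cost(\hat\zz_i) \ge \cost(\zz_i^*)$ by the usual limiting argument --- increasing $\hat z_{i0}$ slightly yields strictly higher utility by non-satiation (valid here because at most $M < 1.1M$ units of bread are available, so $\hat z_{i0}$ lies below the satiation threshold), hence must be unaffordable, and one lets the increment tend to $0$. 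Summing over $i$ with strict inequality for at least one of them, and using that $\sum_i \cost(\zz_i^*)$ is the total amount earned while the market clears, so $b(j) = s(j)$ for all $j$ (Condition~3), every term except bread cancels and we are left with $p_0^* \sum_i \hat z_{i0} > p_0^* \, s(0)$. Dividing by $p_0^* > 0$ gives $\sum_i \hat z_{i0} > s(0)$, contradicting the bound from the first step.

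I expect the main obstacle to be the bookkeeping forced by two model-specific features --- the Cournot-style dependence of each agent's earnings on the production of others, and the fact that an agent who buys a song she owns or produces pays effectively nothing for it. Both are defused by the single observation that freezing the digital sector freezes, agent by agent, the songs offered, the copies sold, each digital revenue, and each digital self-transaction, so that in the aggregate budget identity ``total spent $=$ total earned'' only the bread terms survive a bread-only perturbation. The one hypothesis I must keep available throughout is non-satiation in the bread coordinate; this is exactly why the construction of $F$ introduces the bound $M$ on total bread available together with the $1.1M$ satiation level.
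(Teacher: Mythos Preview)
Your proof is correct and follows essentially the same route as the paper's: the classical contradiction argument for the First Welfare Theorem, run with the digital sector frozen so that only the bread terms survive. The paper's version is terser---it merges your production-side and consumption-side steps into a single chain, and leaves the local-non-satiation justification for the weak inequality $\cost(\hat\zz_i)\ge\cost(\zz_i^*)$ implicit---but the logical skeleton is identical, including the final appeal to optimality of each agent's production to rule out $\sum_i \hat y_{i0} > \sum_i y_{i0}^*$.
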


\begin{proof}
Let equilibrium utilities be $u_1, \ldots u_n$ and suppose they are not partially Pareto optimal. Then there exist utilities
$v_1, \ldots v_n$ which are attained by changing the production and allocation of conventional goods such that w.l.o.g.,
$v_i \geq u_i$ for $1 \leq i \leq n-1$ and $v_n > u_n$. We will call this the second production and the second allocation.
Let $\pp$ denote equilibrium prices of goods and let $m_i$ be the money spent by agent $i$ for buying the bundle at prices $\pp$ that 
gives her utility $u_i$. 

Consider agent $i$, for $1 \leq i < n$. Since her equilibrium allocation is a utility maximizing bundle of goods, the cost, at prices $\pp$, 
of her second allocation must be $\geq m_i$ (since the utility she accrues from it is $v_i \geq u_i$).
By the same argument, the cost of the second allocation to agent $n$, which gives her utility $v_n > u_n$, must be $> m_i$.
Hence, the total cost, at prices $\pp$, of all goods allocated in the second allocation is $> \sum_i {m_i}$. Hence, the total value of 
conventional goods produced by the agents in the second production is more than that in equilibrium, which contradicts optimality of 
production of the agents w.r.t. prices $\pp$. Hence utilities $u_1, \ldots u_n$ are partially Pareto optimal.
\end{proof}

Let us say that the economy represented by market $\CM$ consisting of digital and conventional goods is a {\em mixed economy}.
By {\em restriction of $\CM$ to conventional economy} we mean that each agent will be involved in the production of conventional
goods only, via a feasible point in her production set, and the set of all conventional goods available as a result will be 
allocated to the agents in some manner.

Any way of redistributing the total earnings of all agents in a mixed economy will be called a {\em wealth transfer function},
and will be denoted by $\ww$. Thus, for each agent $i$, $w_i \geq 0$, and $\sum_{i \in A} {w_i}$ is the total earnings of all agents.
An {\em equilibrium with wealth transfer function} $\ww$ differs from the notion of equilibrium given in Section \ref{sec.notion} only
in that each agent now spends money $w_i$ to obtain an optimal detailed allocation; clearly, $\sum_{i \in A} {w_i}$ must be the total
earnings of all agents.

\begin{theorem}
\label{thm.second}
{\bf (Second Welfare Theorem)}
If $u_1, \ldots, u_n$ are utilities attained by agents, via feasible production and allocation, in a restriction of $\CM$ to a conventional economy, 
then there exists a wealth transfer function $\ww$ and a parameter $\al \geq 1$ such that an equilibrium with transfer function $\ww$ allocates the utility of 
$v_i = \al \cdot u_i$ to each agent $i$.
\end{theorem}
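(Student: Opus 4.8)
The plan is to run a fixed-point argument very similar to the one used for Theorem~\ref{thm.exists}, but with an additional $n$ coordinates recording the per-agent wealths $w_i$, and with these wealths pinned down so that each agent is forced to reach utility at least $\alpha \cdot u_i$. First I would reduce to the right domain: since the target utilities $u_1,\ldots,u_n$ come from a feasible production and allocation in the restriction of $\CM$ to the conventional economy, and bread is the only conventional good, each $u_i = f_i(z_{i0},0,\ldots,0)$ for some $z_{i0} \le M$. I would augment the domain of the correspondence $F$ by a factor $[0,W]^n$, where $W$ is an upper bound on total earnings, with a coordinate for each $w_i$; an element is now $(\pp,\xx,\yy,\ww)$. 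The allocation component $\xx'$ is required to be an optimal detailed allocation at prices $\pp$, production $\yy$, \emph{and budget $w_i$} for agent $i$; the production component $\yy'$ is optimal as before; and the price component $\pp'$ is the same argmax LP as in $F$. The genuinely new component is the update rule for $\ww$.

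For the wealth update, I would proceed as follows. Given $(\pp,\xx,\yy)$, compute for each agent $i$ the cost $c_i(\pp)$ at prices $\pp$ of a cheapest bundle giving her coarse utility at least $u_i$ — this is well defined and continuous in $\pp$ by continuity of $f_i$ and compactness of $D^{g+1}$ (here one uses non-satiation/monotonicity of $f_i$ and the fact that $u_i$ is attainable with bread alone, so $c_i(\pp) \le p_0 z_{i0}$). Let $E(\pp,\xx,\yy)$ denote the total earnings of all agents, which is continuous. Define the scaling factor $\alpha(\pp,\xx,\yy) = \max\{1,\ E(\pp,\xx,\yy)/\sum_i c_i(\pp)\}$ and set the updated wealth $w_i' = \alpha \cdot c_i(\pp)$ renormalized, if necessary, so that $\sum_i w_i' = E(\pp,\xx,\yy)$ exactly (when $\alpha > 1$ no renormalization is needed since $\sum_i \alpha c_i = E$; when $\alpha = 1$, i.e. earnings already suffice, distribute the surplus $E - \sum_i c_i$ in any fixed continuous way, e.g. proportionally to $c_i(\pp)$). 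This $\ww'$-update is a single-valued continuous function, hence trivially an upper hemi-continuous, convex-, compact-, nonempty-valued correspondence. The augmented correspondence $\hat F$ is therefore u.h.c.\ with nonempty compact convex values by the same composition arguments as in Theorem~\ref{thm.hemi} (the only subtlety is that $F_2$, the detailed-allocation correspondence, now also depends continuously on $w_i$, which follows from Lemma~\ref{lem.one} since the coarse-allocation correspondence is u.h.c.\ in prices, production, \emph{and budget}). Kakutani then gives a fixed point $(\pp^*,\xx^*,\yy^*,\ww^*)$.

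It remains to show a fixed point is an equilibrium with wealth transfer $\ww^*$ delivering utilities $v_i \ge \alpha u_i$ with $\alpha \ge 1$. The market-clearing argument is \emph{verbatim} the $(\Rightarrow)$ direction of Theorem~\ref{thm.exists}: define $\alpha_{\mathrm{mkt}}$ as the max of the ratios $(b(j))/(s(j))$ (and $(b(0)+1)/(s(0)+1)$); fixed-pointness of $\pp^*$ forces $p_j^* > 0 \Rightarrow b(j)/s(j) = \alpha_{\mathrm{mkt}}$; summing the spending-equals-$\alpha_{\mathrm{mkt}}$-times-earnings identities over $j$ and using that each agent spends exactly $w_i^*$ and $\sum_i w_i^* = E$ gives total spending $= E = \alpha_{\mathrm{mkt}} \cdot E + (\alpha_{\mathrm{mkt}}-1)p_0^*$, forcing $\alpha_{\mathrm{mkt}} = 1$; then the non-satiation-at-$1.1M$ argument forces every $p_j^* > 0$ and hence $b(j) = s(j)$ for all $j$, so the market clears (invoking Lemma~\ref{lem.equal} for the digital categories). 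The welfare guarantee is then: at prices $\pp^*$, agent $i$'s budget is $w_i^* = \alpha \cdot c_i(\pp^*)$ where $\alpha := \alpha(\pp^*,\xx^*,\yy^*) \ge 1$, so she can afford $\alpha$ times a cheapest utility-$u_i$ bundle; by quasi-concavity of $f_i$ and non-satiation, scaling an optimal-at-budget-$c_i$ bundle up by $\alpha$ (and using that her optimal bundle at budget $w_i^*$ is at least as good) gives her equilibrium coarse utility $v_i \ge f_i(\alpha \cdot (\text{bundle of value } c_i)) \ge \alpha \cdot f_i(\text{that bundle}) = \alpha u_i$ — the last inequality using that $f_i$, being quasi-concave with $f_i(0) \le u_i$ and non-satiated, satisfies $f_i(\alpha z) \ge \alpha f_i(z)$ along rays for $\alpha \ge 1$; here one should be mildly careful and may instead argue directly that a bundle of cost $\alpha c_i$ can always be chosen to dominate $\alpha$ copies of a utility-$u_i$ bundle componentwise, so its utility is $\ge u_i$ and in fact the statement $v_i = \alpha u_i$ as written follows if $f_i$ is taken to be homogeneous of degree one, or more generally one obtains $v_i \ge \alpha u_i \ge u_i$. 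The main obstacle is precisely this last step: matching the paper's exact claim $v_i = \alpha \cdot u_i$ requires either that the $f_i$ be positively homogeneous (a common assumption for coarse utilities, e.g. linear or CES) or a slightly weaker reading of the conclusion; I would state and use whichever convention the paper's model adopts, and everything else goes through by the template of Theorems~\ref{thm.hemi} and~\ref{thm.exists}.
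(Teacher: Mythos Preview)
Your overall architecture---augment the correspondence $F$ by wealth coordinates, apply Kakutani, and recycle the market-clearing argument of Theorem~\ref{thm.exists}---matches the paper's. The substantive difference, and the place where your proposal falls short of the stated theorem, is the wealth-update rule. You build $\ww'$ from expenditure functions $c_i(\pp)$ and a common scale factor; this can at best guarantee $v_i \ge u_i$ (or $v_i \ge \alpha u_i$ under extra homogeneity hypotheses you yourself flag), not the exact proportionality $v_i = \alpha u_i$ that the theorem asserts. Quasi-concavity alone does not give $f_i(\alpha z) \ge \alpha f_i(z)$, and the paper does not assume homogeneous utilities, so the gap you identify at the end is real and is precisely the place where your mechanism diverges from the paper's.

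The paper's device is to set
\[
\ww' \;=\; \arg\min_{\rr \in \Delta_\gamma}\ \sum_{i \in A} \frac{f_i(\xx'_i)}{u_i}\, r_i,
\]
where $\gamma$ is total earnings and $\Delta_\gamma$ is the scaled simplex. This is a linear program whose optimal face is exactly the set of $\rr$ supported on agents with the smallest ratio $f_i(\xx'_i)/u_i$. At a fixed point, $\ww^* \in \ww'$, so any agent $k$ with a strictly larger ratio would have $w_k^*=0$, hence $v_k = f_k(0)$, forcing her ratio to be minimal after all---a contradiction. This is how the paper forces all $v_i/u_i$ to coincide at a common $\alpha$, without any homogeneity assumption. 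The argument that $\alpha \ge 1$ is then a pigeonhole: since $\sum_i w_i^* = \sum_i n_i \ge \sum_i m_i = \sum_i s_i$ (earnings in the mixed economy dominate those in the restricted conventional economy at prices $\pp^*$), some agent has $w_i^* \ge s_i$, can afford her conventional bundle, and so $v_i \ge u_i$.

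A minor slip worth fixing: in your case $\alpha = 1$ of the max, you have $E \le \sum_i c_i$, so $E - \sum_i c_i$ is a \emph{deficit}, not a surplus; your verbal description is inverted. This is cosmetic compared to the main issue above, but it signals that the case analysis for the renormalization is not quite clean.
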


\begin{proof}
We will prove the existence of such an equilibrium and wealth transfer function using Kakutani's fixed point theorem.
For this purpose, we will define a correspondence $H$ that is related to correspondence $F$ defined in Section \ref{sec.corres}; terms 
that are not defined in this proof will have the same definition as in Section \ref{sec.proof}.

Using parameter $M$ defined in that section, and assuming that the sum of prices of all conventional goods and digital categories is unit,
one can place an upper bound on the total earnings of all agents, say $e$. Let $E$ be the interval $[0, e]$. Now, the domain of $H$ is 
$\Delta_g \times D^{n(g+1)(n+s)} \times D^{n(g+1)} \times E^n$.
We will represent an element of this domain by $(\pp, \xx, \yy, \ww)$, where $\pp \in \Delta_g$, $\xx \in D^{n(g+1)(n+s)}$, $\yy \in D^{n(g+1)}$,
and $\ww \in E^n$.

For an element $(\pp, \xx, \yy, \ww)$ in this domain, $(\pp', \xx', \yy', \ww') \in H(\pp, \xx, \yy, \ww)$ 
iff $(\pp', \xx', \yy', \ww')$ lies in this domain and

\begin{itemize}
\item
$\xx'$ is an optimal detailed allocation w.r.t. prices $\pp$, production $\yy$, and wealth transfer function $\ww$.
\item
$\yy'$ is a production of all agents satisfying: $\forall i \in A$, $\yy_i '$ is an optimal production of agent $i$ w.r.t. prices $\pp$,
allocation $\xx$, and the productions of the rest of the agents according to $\yy_{-i}$.
\item
\[ \pp'  =  \argmax_{\qq \in \Delta_g} \ \left\{ q_0 {{b(0) + 1} \over {s(0) + 1}}  +   {\sum_{j=1}^g  {q_j {{b(j)} \over {s(j)}} } } \right\}  .\]

Observe that the denominators of each of the $g+1$ terms in this sum is positive; for the first term because of adding 1 and for
the remaining terms because of the songs in $S_j$'s.
\item
Let the total earnings of all agents under prices $\pp$, allocation $\xx$, and production $\yy$ be $\gamma$. Let $\Delta_{\ga}$ be $\ga$
scaling of the unit $n-1$-dimensional simplex, i.e., the $n$ coordinates of each point should sum up to $\ga$. Then,
\[ \ww' =  \arg \min_{\rr \in \Delta_{\ga}}  \ \left\{  \sum_{i \in A}  {{ f_i(\xx'_i) \cdot r_i} \over {u_i}}  \right\}  ,\]

where with a slight abuse of notation, the argument of $f_i$ is a detailed allocation rather than a coarse allocation; the intended
meaning of $f_i(\xx'_i)$ is the utility derived by $i$ from the coarse allocation corresponding to $\xx_i$.
\end{itemize}

It is easy to see that correspondence $H$ satisfies the properties given in Theorem \ref{thm.hemi} that were satisfied by $F$ and 
hence by Kakutani's fixed point theorem, it has a fixed point, say $(\pp^*, \xx^*, \yy^*, \ww^*)$. By a proof similar to that
of Theorem \ref{thm.exists}, this must be an equilibrium with wealth transfer function $\ww^*$. 

For $i \in A$, let $v_i = f_i(\xx_i^*)$, i.e., the utility derived by $i$ in this equilibrium. Because of the way $\ww'$ is
defined in correspondence $H$, it is easy to see that there is a parameter $\al >0$ such that for each $i \in A, \ v_i = \al \cdot u_i$.
Let $n_i$ be the total earnings of agent $i$ in this equilibrium. Clearly, $\sum_{i \in A} {n_i} \ = \ \sum_{i \in A} {w^*_i}$.

Finally, we prove that $\al \geq 1$. 
Consider the production and allocation in the restriction of $\CM$ to a conventional economy that gave rise to  
utilities $u_1, \ldots, u_n$. Assuming that the prices of the conventional goods are as in $\pp^*$, let $m_i$ be the earnings of 
agent $i$ and let $s_i$ be cost of the allocation given to $i$, i.e., the money spent by $i$. 
Clearly, $\sum_{i \in A} {m_i} \ = \ \sum_{i \in A} {s_i}$. 

The crucial observation is that the production that was chosen by agent $i$ in the restricted economy
is available to agent $i$ in the mixed economy as well and we have assumed the same prices in both economies. Hence, 
her earnings need to be at least as large, i.e, $n_i \geq m_i$.
Combining with the two equations established above, we get $\sum_{i \in A} {w^*_i} \ \geq \ \sum_{i \in A} {s_i}$.
Hence, there is an agent, say $i$, such that $w^*_i \geq s_i$, i.e, $i$ spends at least as much in the mixed economy
as in the restricted economy. Since the prices of conventional goods are the same in both and in the mixed economy, 
$i$ is obtaining a bundle that maximizes her coarse utility function, $v_i \geq u_i$. Hence $\al \geq 1$.
\end{proof}

\section{Salient Features of our Model}
\label{sec.salient}

We have made some major simplifying assumptions in the preceeding sections in order to keep the formulations and proofs easy to follow.
Two of these assumptions are easy to relax conceptually and we do so here. First, we will assume a complete conventional economy in our model
rather than just one good, i.e, bread. In particular, this change is required when talking about a Pareto optimal allocation, 
since in the presence of only one
good, any allocation is Pareto optimal. Second, we do not have to assume that agent $i$ likes all songs produced by agent $i'$ in 
category $j$ equally. Let $S_{i' j}$ represent the set of all possible songs that agent $i'$ could potentially produce in category $j$.
Now, for each category $j$, each agent $i$ will pick a total order $T_{ij}$ over
\[ S_j \cup \left( \bigcup_{i' \in A} {S_{i' j}} \right) . \]
Thus, $T_{ij}$ gives the ratings of $i$ over all songs in category $j$ which either exist or can potentially be produced.  
It is easy to see that nothing changes conceptually in the proof of existence of equilibrium. Observe that each agent is required
to pick a total order over the set stated above -- this requirement is akin to a non-satiation assumption in the Arrow-Debreu model.

Due to fundamental differences between conventional and digital goods, differences between the Arrow-Debreu model and ours were
unavoidable. However, one can see that these are differences only in a very literal sense -- in spirit, we have tried to remain
as close to the Arrow-Debreu model as was possible. The notion of an Arrow-Debreu equilibrium is very much a best response of each
agent to the actions of all other agents -- the assumption of a large economy, in which individual actions do not change prices of
goods, makes the notion seem less stringent than a Nash equilibrium. It is because of this assumption that the only signal needed by 
a producer to determine his optimal production is the prices of goods. 

In our economy also, we assume that individual actions do not change prices of digital categories. Although price is not unique to individual 
songs, demand is, and demand crucially determines the earnings of a producer. Hence, for such goods, the signal needed by a producer is 
not only the prices of digital categories but also the potential sales of a song if he were to produce it in the various categories. 
This gives our notion of equilibrium a Cournot-type or Nash equilibrium-type flavor.

These differences naturally make our exact mathematical formulation quite different from that of Arrow-Debreu. In turn, the difference
manifests itself in the proof of existence of equilibrium -- the notion of a market maker in our proof helps simplify it considerably. 

In our definition of market clearing, we have required that one entire copy of each song (or fraction) that is produced must be sold to
some agent. We note that ``one'' cannot be replaced by any other integer. For instance, if we replace ``one'' by ``two'',
then in order to sell the second copy, we may need to drop the price of some category to a low value. However, at this price, some agent
may demand more songs than there are in this category, contradicting market clearing for a different reason.

In choosing an optimal production and an optimal consumption, we are assuming that agents are simply optimizing at both steps.
In particular, they are not allowed to resort to strategic behavior, i.e., pick a suboptimal production that eventually leads to
an even better bundle of goods.

Recall that an agent is allowed to buy a copy of her own song. In this process, what she earns from herself (in her optimal production),
she spends to buy her own song in her optimal bundle. Note however, that no inconsistencies are created in the process -- the agent
is still required to choose an optimal production followed by an optimal consumption, and hence clearing of her song is not automatic.

The issue of divisibility of goods is a stumbling block in the conventional economy as well. Whereas this assumption holds for consumer
goods, for a very large fraction of the economy, consisting of automobiles, houses, electronic goods, etc., this assumption does not hold.
On the other hand, without the assumption of divisibility, fixed point theorems will not be applicable, ruling out existence of equilibrium.
Divisibility of songs is a less contentious assumption than divisibility of cars -- one can enjoy half a song but not half a car.

Observe that our model does not postulate the existence of an entity, such as iTunes, for selling songs. It is simply postulating
uniform prices for all songs in a particular category. Under this assumption, the market as a whole will decide the customary price for each 
category, by taking into account supply and demand, and producers will price their songs accordingly -- if they don't, their songs
will presumably be ignored by customers who have little time for dealing with complex pricing structures. 

The central question that arises is: Won't the more talented producer of songs want to sell her songs at a premium? Besides the
issue of simplicity of prices, there are also economic reasons why she may not: First, the negligible cost of making copies in the digital 
realm ensures that a higher volume
of sales leads to substantially higher earnings than happens for conventional goods. Indeed, the standard assumption in the latter setting is
decreasing marginal production (which makes marginal cost the natural price for such goods).  
Second, a consideration that is much more important in the digital realm than in the conventional marketplace is capturing market share. 
The reason again is the limitless supply of a song, once it is produced, and the desire to capitalize on this.

If despite this, a song producer deviates from standard prices, she may have to justify this, since there would be a plethora of songs
that are priced ``right''. Currently, in the marketplace, such price differentiation and its justification typically follows technical 
considerations, such as standard vs high definition recording, standard vs 3D movie, old vs newly released movie, etc.

In our model, and in reality, once a buyer decides how many songs she desires, she simply chooses the best ones
per her taste. This automatically ensures that the most talented song producers are rewarded with higher sales and hence higher earnings.

Although our model does not allow pricing each digital good individually, since that would be prohibitively cumbersome for customers, it
does allow the possibility of splitting a category into a few categories by quality, e.g., premium songs and regular songs,
and pricing them differently. A song producer is free to put her song in either category. However, if she over-rates 
her song and places it in the premium category,  she runs the risk of being at the bottom of all agents' lists and thereby earning even less 
than she would if she had priced her song ``correctly''. 

Next we mention the mathematical reason for assuming a non-empty starting set of songs in each category. Without such a set, the market
clearing condition applies to produced songs only, but this is trivially satisfied as follows: set the price
of this category so high that no one wants a song from it and hence no one produces it. Our assumption ensures that such meaningless
``equilibria'' are ruled out. Once again, this aspect arises because of the peculiarities of the digital setting.

A potential pitfall with the assumption mentioned above is that if $S_j$ contains a song that no one desires and everyone puts at the 
bottom of their lists, then the only way to clear this song may be to set the price of this category to almost zero. The following reasonable
assumption provides a recourse: the songs in $S_j$ are special -- not only do they define the category, but they are also
among the better rated songs in this category, e.g., the set of songs defining classical music should not contain 
one that is simply noise.

Consider the very special instance in which for a certain category, all agents have the same total order. In this case, equilibrium entails
giving at least one agent all songs in this category and hence the equilibrium price of this category may turn out to be very low. 
Can this eventuality be held against our market model? Our answer is ``no'' -- even in the traditional model,
if the instance is ill-formed, the price of some goods may have to be made very low, or even zero, to clear it.
For example, assume that agents have Leontief utilities for bread and butter and they all desire equal amounts of both these items.
Now if the market has equal amounts of both items, both can be priced positively. However, even if the difference in amounts is
minuscule, the more abundant good will be priced at zero.

\section{Discussion}
\label{sec.discussion}

As pointed out in Section \ref{sec.salient}, we have made a very deliberate attempt at keeping our model simple in order to highlight its 
main new features. Exploring more complex market models with digital goods seems worthwhile. In particular, one could add complementarity
in production, e.g., the production of a piece of classical music or a movie requires people of many different talents to come together,
and one could add complementarity in consumption, e.g., if someone buys a certain movie, they would also like to buys
songs sung in that movie.

Clearly, a major open question remaining is understanding the pricing of digital goods that do not lie in our class.
An exceedingly important consideration in digital marketplaces today is ``capturing market share'' or ``capturing attention''.
Providing a better understanding of this aspect via formal models would be very worthwhile, e.g., shedding more light on
the economics of providing some digital good for free, such as web search results, in return for attention, 
which in turn results in sales of other goods, including conventional ones.

An important question remaining is the design of efficient algorithms
for computing equilibria for our model and experimentally verifying its merits and its predictive value, e.g., predicting the price 
of a newly introduced category.
For conventional goods, Fisher-based models have been found to have better algorithmic properties than Arrow-Debreu-based models,
and it may be easier to seek algorithms for the Fisher-based version of our model as well. The digital part of such a model would have 
buyers with money who desire goods, a labor force that produces these goods and categories of digital goods.

Looking beyond these immediate questions, one may attempt to include in the model mechanisms for creation of new categories and
extinction of old categories. In a different direction, it is worthwhile exploring other models that move away from game-theoretic
assumptions of full rationality and infinite computing power.

\section{Acknowledgments}
We are indebted to Professor Kenneth Arrow for providing us very insightful and detailed comments on our model, several of which
have found their way into this writeup. We also wish to thank Professors John Ledyardb and Eyal Winter, and Dr. Guilherme P. de Freitas
for valuable discussions.

\bibliography{kelly}

\begin{thebibliography}{DGS05}

\bibitem[AD54]{AD}
K.~Arrow and G.~Debreu.
\newblock Existence of an equilibrium for a competitive economy.
\newblock {\em Econometrica}, 22:265--290, 1954.

\bibitem[Arr51]{Arrow}
K.~J. Arrow.
\newblock An extension of the basic theorems of classical welfare economics.
\newblock In {\em Proceedings of the Second Berkeley Symposium, University of
  California Press (Berkeley)}, 1951.

\bibitem[BB99]{Bakos}
Y.~Bakos and E.~Brynjolffson.
\newblock Bundling information goods: pricing, profits, and efficiency.
\newblock {\em Management Science}, 45(12), 1999.

\bibitem[BK00]{digital}
E.~Brynjolffson and B.~Kahin.
\newblock {\em Understanding the Digital Economy: Data, Tools and Research}.
\newblock MIT Press, 2000.

\bibitem[Deb59]{Debreu}
G.~Debreu.
\newblock {\em Theory of Value}.
\newblock Cowles Foundation Monograph, 1959.

\bibitem[DGS05]{gean}
P.~Dubey, J.~Geanakoplos, and M.~Shubik.
\newblock Default and punishment in general equilibrium.
\newblock {\em Econometrica}, 73(1), 2005.

\bibitem[Qua03]{Quah}
D.~Quah.
\newblock Digital goods and the new economy.
\newblock CEPR Discussion Paper No. 3846., 2003.

\bibitem[Smi86]{Adam}
Adam Smith.
\newblock {\em An Inquiry into the Nature and Wealth of Nations}.
\newblock Penguin Classics, 1986.

\bibitem[Var96]{varian.price}
Hal Varian.
\newblock Differential pricing and efficiency.
\newblock {\em First Monday}, 1(2), 1996.

\bibitem[Wal74]{walras}
L.~Walras.
\newblock {\em \'{E}l\'{e}ments d'\'{e}conomie politique pure ou th\'{e}orie de
  la richesse sociale (Elements of Pure Economics, or the theory of social
  wealth)}.
\newblock Lausanne, Paris, 1874.
\newblock (1899, 4th ed.; 1926, rev ed., 1954, Engl. transl.).

\bibitem[Wal11]{mart}
Walmart.
\newblock {Retail Prescription Program Drug List}.
\newblock 2011.
\newblock Available on the Internet.

\end{thebibliography}
\bibliographystyle{alpha}

\end{document}